\theoremstyle{definition}
\newtheorem{example}{Example}[section]
\newtheorem{theorem}{Theorem}[section]
\newtheorem{corollary}{Corollary}[section]
\newtheorem{lemma}{Lemma}[section]
\begin{document}

\title{Stochastic Enumeration with Importance Sampling}


\author{Alathea Jensen}



\date{December 5, 2017}

\maketitle

\begin{abstract}
Many hard problems in the computational sciences are equivalent to counting the leaves of a decision tree, or, more generally, by summing a cost function over the nodes.  These problems include calculating the permanent of a matrix, finding the volume of a convex polyhedron, and counting the number of linear extensions of a partially ordered set.  Many approximation algorithms exist to estimate such sums. One of the most recent is Stochastic Enumeration (SE), introduced in 2013 by Rubinstein. In 2015, Vaisman and Kroese provided a rigorous analysis of the variance of SE, and showed that SE can be extended to a fully polynomial randomized approximation scheme for certain cost functions on random trees.  We present an algorithm that incorporates an importance function into SE, and provide theoretical analysis of its efficacy. We also present the results of numerical experiments to measure the variance of an application of the algorithm to the problem of counting linear extensions of a poset, and show that introducing importance sampling results in a significant reduction of variance as compared to the original version of SE.

\end{abstract}

\section*{Acknowledgments}

This is a pre-print of an article published in Methodology and Computing in Applied Probability. The final authenticated version is available online at:

\url{https://doi.org/10.1007/s11009-018-9619-2}

The author would like to thank Isabel Beichl and Francis Sullivan for the idea for this project.  The author would also like to thank the Applied and Computational Mathematics Division of the Information Technology Laboratory at the National Institute of Standards and Technology for hosting the author as a guest researcher during the preparation of this article.

\section{Introduction}
Many hard problems in mathematics, computer science, and the physical sciences are equivalent to summing a cost function over a tree.  These problems include calculating the permanent of a matrix, finding the volume of a convex polyhedron, and counting the number of linear extensions of a partially ordered set.

There are tree-searching algorithms which give an exact answer by simply traversing every node in the tree; however, in many cases, the tree is far too large for this to be practical.  Indeed, the problem of computing tree cost is in the complexity class \#P-complete \citep*{valiant}.  This complexity class consists of counting problems which find the number of solutions that satisfy a corresponding NP-complete decision problem.

Accordingly, there are various approximation algorithms for tree cost, and the two main types of these are Markov Chain Monte Carlo (MCMC) and sequential importance sampling (SIS).  Both of these perform random sampling on a suitably defined set.

The original version of SIS is Knuth's algorithm \citep*{knuth}, which samples tree cost by walking a random path from the root to a leaf, where each node in the path is chosen uniformly from the children of the previously chosen node.  There have been several major adaptations to Knuth's algorithm, all of which attempt to reduce the variance of the estimates produced.

One modification of Knuth's algorithm is to choose the nodes of the path non-uniformly, proportional to an importance function on the nodes.  Of course, choosing a good importance function requires some knowledge about the structure of the tree, and so this approach is not suitable for random trees, but rather for families of trees which share some general characteristics.  Some cases where this approach has produced good results can be found in \citet*{jcp}, \citet*{blitz}, \citet*{algo}, \citet*{karp}, for example.

There have also been adaptations of Knuth's algorithm which change the algorithm in a more structural way, such as stratified sampling, which was introduced by Knuth's student, \citet*{chen}.

Stochastic Enumeration (SE) is the most recent of the structural adaptations.  It was originally introduced by \citet*{rubin}, and further developed in \citet*{rubinbook}.  Its approach to the problem is to run many non-independent trajectories through the tree in parallel, combining their effect on the estimate at each level of the tree to produce a single final estimate of the tree cost.  Alternatively, one can view SE as operating on a hypertree associated with the original tree.  A similar approach to the problem was taken by \citet*{brian}.

In Rubinstein's original definition, the SE algorithm was only able to count the leaves of a tree.  \citet*{vaisman} updated SE to estimate tree cost for any cost function, and provided a rigorous analysis of the variance.  They also showed that SE can be extended to an fully polynomial randomized approximation scheme (FPRAS) for random trees with a cost function that is 1 on every node.

In this paper, we follow up on the work of Vaisman and Kroese to develop an adaptation of SE which we call Stochastic Enumeration with Importance (SEI).  This algorithm chooses paths through the tree with non-uniform probability, according to a user-defined importance function on the nodes of the tree.  We provide a detailed analysis of the theoretical properties of the algorithm, including ways to bound the variance.

Just as with SIS, SEI is not suitable for random trees, but rather for families of trees which share some characteristics.  Therefore, in addition to theoretical analysis in which the importance function is not specified, we also provide a detailed example, with numerical results, of a family of trees and importance functions for which SEI provides a lower variance than SE.

\section{Definitions and Preliminaries}
Consider a tree $T$ with node set $\mathcal{V}$, where each node $v$ has some \textit{cost} $c(v)$ given by a \textit{cost function} $c:\mathcal{V}\to \mathbb{R}_{\geq 0}$.  We wish to estimate the total \textit{cost of the tree}, denoted $\mathrm{Cost}(T)$ and given by
\[\mathrm{Cost}(T)=\sum_{v\in\mathcal{V}}c(v)\]

If our tree is uniform, in the sense that all the nodes on a given level have the same number of children, then it is very easy to determine the number of nodes on each level.

We will call the root node level 0, the root's children level 1, and so on.  Suppose the root has $D_0$ children, the root's children all have $D_1$ children, and so on.  Then there is 1 node on level 0, $D_0$ nodes on level 1, $D_0 D_1$ nodes on level 2, and, in general, $D_0 D_1\cdots D_{i-1}$ nodes on level $i$.

If the cost of nodes is also uniform across each level, then we can easily add up the cost of the entire tree.  For each level $i$, let the cost of any node on level $i$ be denoted $c_i$.  Then the cost of our tree is

\begin{equation}\label{knuthcost}
\mathrm{Cost}(T)=c_0+c_1 D_0+c_2 D_0 D_1 + \dots + c_n D_0 D_1 \cdots D_{n-1}
\end{equation}
where $n$ is the lowest level of the tree.

Of course, most trees are not uniform is the sense described above, but the central idea of Knuth's algorithm \citep*{knuth} for estimating tree cost is to pretend as though they are.  In Knuth's algorithm, we walk a single path from the root to a leaf, and note the number of children that we see from each node in our path ($D_0, D_1, \dots, D_n$), as well as the cost of each node in our path ($c_0, c_1, \dots, c_n$).  We then calculate the cost of the tree using Formula (\ref{knuthcost}), which is no longer exact but is now an unbiased estimator of the tree cost.

In the SE algorithm, just as in Knuth's algorithm, we work our way down the tree level by level from the root to the leaves.  The main difference is that instead of choosing a single node on each level of the tree, we choose multiple nodes on each level.  We can also think of this as choosing a single hypernode from each level of a hypertree constructed from the original tree.  The following definitions are necessary to describe the structure of the hypertree.

We define a \textit{hypernode} to be a set of distinct nodes $\mathbf{v}=\{v_1,\dots,v_m\}\subset\mathcal{V}$ that are in the same level of the tree.  We can extend the definition of the cost function to hypernodes by letting
\[c(\mathbf{v})=\sum_{v\in\mathbf{v}}c(v)\]

Let $S(v)$ denote the set of successors (or children) of a node in the tree.  Then we can define the set of \textit{successors of a hypernode} $\mathbf{v}$ as
\[S(\mathbf{v})=\bigcup_{v\in\mathbf{v}}S(v)\]

Throughout the SE algorithm, each time we move to a new level, we choose a new hypernode from among the successors $S(\mathbf{v})$ of the previous hypernode $\mathbf{v}$.  We make no distinction between these successors in terms of which node in the previous hypernode they came from.  This means that some nodes in the previous hypernode may have multiple children chosen to be in the next hypernode, while other nodes in the previous hypernode may not have any children chosen to be in the next hypernode.

Obviously there is some limit on our computing power, so we have to limit the size of the hypernodes we work with to be within a budget, which we will denote $B\in\mathbb{N}$.  At each level, we will choose $B$ nodes to be in the next hypernode, as long as $S(\mathbf{v})$ is larger than $B$.  If $|S(\mathbf{v})|\leq B$, then we will take all of $S(\mathbf{v})$ to be the next hypernode.

Thus, if our current hypernode is $\mathbf{v}$, the candidates for our next hypernode, which we call the \textit{hyperchildren} of $\mathbf{v}$, are the elements of the set
\[H(\mathbf{v})=\big\{\mathbf{w}\subseteq S(\mathbf{v}):|\mathbf{w}|=\mathrm{min}(B,|S(\mathbf{v})|)\big\}\]

Many of the statements and proofs throughout this paper are in a recursive form that refers to subforests of a tree, and so we lastly need to define a \textit{forest rooted at a hypernode}.  For a hypernode $\mathbf{v}$, the forest rooted at $\mathbf{v}$, denoted $T_{\mathbf{v}}$, is simply the union of all the trees rooted at each of the nodes in $\mathbf{v}$.
\[T_\mathbf{v}=\bigcup_{v\in\mathbf{v}}T_v\]

We can also extend the notion of the total cost of a tree to a forest rooted at a hypernode by letting
\[\mathrm{Cost}(T_\mathbf{v})=\sum_{v\in\mathbf{v}}\mathrm{Cost}(T_v)\]

Let's look at an example to familiarize ourselves further with the notation.

\begin{example}\label{notationexample}
Consider the tree in Figure \ref{notationtree}.  It is labeled with a possible sequence of hypernodes that could be chosen by the SE algorithm, using a budget of $B=2$.

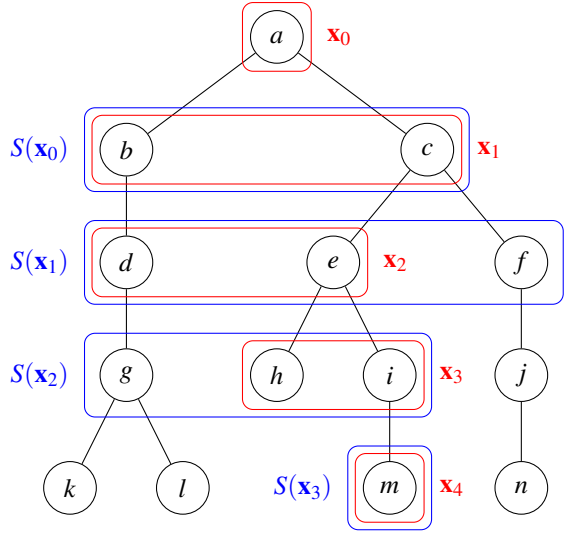
\begin{figure}
\centering
\begin{tikzpicture}[
level distance = 15mm,
level 1/.style={sibling distance=40mm},
level 2/.style={sibling distance=25mm},
level 3/.style={sibling distance=15mm},
every node/.style={draw,circle, minimum size=7mm}]
\node(a) {$a$}
child {
	node(b) {$b$}
	child {
		node(d) {$d$}
		child {
			node(g) {$g$}
			child {
				node(k) {$k$}
				}
			child {node(l) {$l$}}
			}
		}
	}
child {
	node(c){$c$}
	child {
		node(e) {$e$}
		child {	node(h) {$h$}}		
		child {
			node(i) {$i$}
			child {node(m) {$m$}}
			}
		}
	child {
		node(f) {$f$}
		child {
			node(j) {$j$}
			child { node(n) {$n$} }
			}
		}
	};
\node[draw=red, fit=(a),inner sep=1mm,rectangle,rounded corners,label={[red]right:$\mathbf{x}_0$}](x0){};
\node[draw=red, fit=(b) (c),inner sep=1mm,rectangle,rounded corners,label={[red]right:$\mathbf{x}_1$}](x1){};
\node[draw=red, fit=(d) (e),inner sep=1mm,rectangle,rounded corners,label={[red]right:$\mathbf{x}_2$}](x2){};
\node[draw=red, fit=(h) (i),inner sep=1mm,rectangle,rounded corners,label={[red]right:$\mathbf{x}_3$}](x3){};
\node[draw=red, fit=(m),inner sep=1mm,rectangle,rounded corners,label={[red]right:$\mathbf{x}_4$}](x4){};

\node[draw=blue, fit=(b) (c),inner sep=2mm,rectangle,rounded corners,label={[blue]left:$S(\mathbf{x}_0)$}](Sx0){};
\node[draw=blue, fit=(d) (e) (f),inner sep=2mm,rectangle,rounded corners,label={[blue]left:$S(\mathbf{x}_1)$}](Sx1){};
\node[draw=blue, fit=(g) (h) (i),inner sep=2mm,rectangle,rounded corners,label={[blue]left:$S(\mathbf{x}_2)$}](Sx2){};
\node[draw=blue, fit=(m),inner sep=2mm,rectangle,rounded corners,label={[blue]left:$S(\mathbf{x}_3)$}](Sx3){};
\end{tikzpicture}
\caption{Tree for Example \ref{notationexample}, with each chosen hypernode boxed and labeled to the right, and each chosen hypernode's successor set boxed and labeled to the left}
\label{notationtree}
\end{figure}

On level 0, the root is automatically chosen to be the first hypernode, $\mathbf{x}_0$.  We could then refer to the entire tree as $T_{\mathbf{x}_0}$.  On level 1, we have $S(\mathbf{x}_0)=\{b,c\}$.  Since $|S(\mathbf{x}_0)|\leq B$, we take all of $S(\mathbf{x}_0)$ to be our next hypernode, so $\mathbf{x}_1=\{b,c\}$.

On level 2, we have $S(\mathbf{x}_1)=\{d,e,f\}$, so our choices for $\mathbf{x}_2$ are the elements of $H(\mathbf{x}_1)=\{\{d,e\},\{d,f\},\{e,f\}\}$.  Let's choose $\mathbf{x}_2=\{d,e\}$.  Similarly, on level 3, we have $S(\mathbf{x}_2)=\{g,h,i\}$, so our choices for $\mathbf{x}_3$ are $H(\mathbf{x}_2)=\{\{g,h\},\{g,i\},\{h,i\}\}$.  Let's choose $\mathbf{x}_3=\{h,i\}$.

Finally, on level 4, we have $S(\mathbf{x}_3)=\{m\}$.  Since $|S(\mathbf{x}_3)|\leq B$, we take all of $S(\mathbf{x}_3)$ to be our next hypernode, so $\mathbf{x}_4=\{m\}$.\qed
\end{example}

\section{Stochastic Enumeration with Arbitrary Probability}

We are now ready to state the first algorithm, Stochastic Enumeration with arbitrary probability (SEP).  It is a generalization of the updated Stochastic Enumeration algorithm in \citet*{vaisman}, which used uniform probabilities.

\begin{algorithm}[H]
\DontPrintSemicolon
\SetKwInOut{Input}{Input}\SetKwInOut{Output}{Output}
\Input{A forest $T_\mathbf{v}$ of height $h$ rooted at a hypernode $\mathbf{v}$, and a budget $B\in\mathbb{N}$}
\Output{An unbiased estimator $|\mathbf{v}|C_{\mathrm{SEP}}$ of the total cost of the forest $T_\mathbf{v}$}
\BlankLine
\textbf{(Initialization):} Set $k\leftarrow 0$, $D\leftarrow 1$, $\mathbf{x}_0=\mathbf{v}$, and $C_{\mathrm{SEP}}\leftarrow c(\mathbf{x}_0)/|\mathbf{x}_0|$.\;
\textbf{(Compute the successors):} Let $S(\mathbf{x}_k)$ be the set of all successors of $\mathbf{x}_k$.\;
\textbf{(Terminal position?):} If $|S(\mathbf{x}_k)|=0$, the algorithm stops, returning $|\mathbf{v}|C_{\mathrm{SEP}}$ as an estimator of $\mathrm{Cost}(T_\mathbf{v})$.\;
\textbf{(Advance):} Choose hypernode $\mathbf{x}_{k+1}\in H(\mathbf{x}_k)$ with probability $P(\mathbf{x}_{k+1})$.\;
\textbf{(Update):} Set $D_k\leftarrow \frac{|\mathbf{x}_{k+1}|}{|\mathbf{x}_k|}\binom{|S(\mathbf{x}_k)|-1}{|\mathbf{x}_{k+1}|-1}^{-1}(P(\mathbf{x}_{k+1}))^{-1}$, set $D\leftarrow D\cdot D_k$, and set $C_{\mathrm{SEP}}\leftarrow C_{\mathrm{SEP}}+\frac{c(\mathbf{x}_{k+1})}{|\mathbf{x}_{k+1}|}D$.\;
\textbf{(Loop):} Increase $k$ by 1 and return to Step 2.\;
\caption{Stochastic Enumeration with arbitrary probability (SEP) algorithm for estimating the cost of a backtrack tree}
\end{algorithm}
\vspace{\baselineskip}

Note that the quantity $D_k$ is an estimate of the number of children of the nodes in level $k$, so that after each update in line 5, $D$ is an estimate of the number of nodes in level $k+1$ of the tree.

Likewise, the quantity $\frac{c(\mathbf{x}_{k+1})}{|\mathbf{x}_{k+1}|}$ is an estimate of the average cost of nodes on level $k+1$, so that by adding $\frac{c(\mathbf{x}_{k+1})}{|\mathbf{x}_{k+1}|}D$ to $C_{\mathrm{SEP}}$ on line 5, we are adding the estimated cost of all of level $k+1$ of the tree.

Before analyzing this algorithm further, let's look at an example to get a better idea of how it works.

\begin{example}
\label{SEPexample}
Consider the tree in Figure \ref{SEPexamplefigure}.  To keep things simple, we'll use a budget of $B=2$ and a cost function $c$ that is 1 on every node.  Clearly the total cost of the tree is the number of nodes, 14.  This choice simplifies $\frac{c(\mathbf{x}_{k+1})}{|\mathbf{x}_{k+1}|}$ to 1, so the update command for $C_\mathrm{SEP}$ becomes 
\[C_{\mathrm{SEP}}\leftarrow C_{\mathrm{SEP}}+D\]

Let's choose hypernodes with a uniform probability, meaning $P(\mathbf{x}_{k+1})=1/|H(\mathbf{x}_k)|$.  Since $|H(\mathbf{x}_k)|=\binom{|S(\mathbf{x}_k)|}{|\mathbf{x}_{k+1}|}$, this makes the formula for $D_k$ simplify to $\frac{|S(\mathbf{x}_{k})|}{|\mathbf{x}_k|}$, so the update command for $D$ becomes
\[D\leftarrow\frac{|S(\mathbf{x}_{k})|}{|\mathbf{x}_k|}D\]

Note that $\frac{|S(\mathbf{x}_{k})|}{|\mathbf{x}_k|}$ is the average number of children of the nodes in $\mathbf{x}_k$.  In the original SE algorithm, the update command for $D$ always looks like this.

Now let's examine a possible sequence of hypernodes produced by Algorithm 2, as shown in Figure \ref{SEPexamplefigure}, which is the same as the previous example.

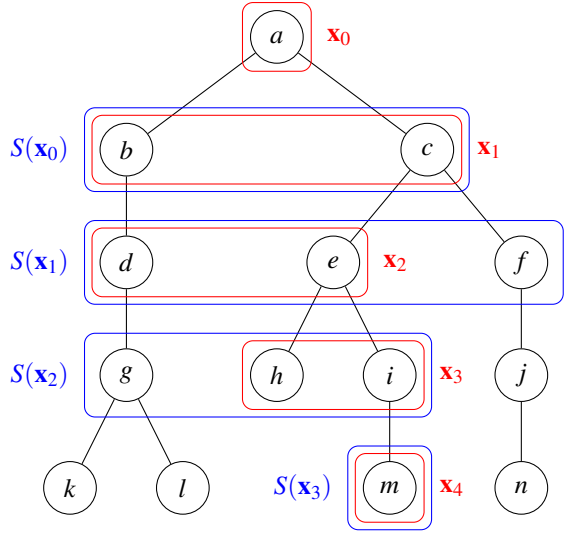
\begin{figure}
\centering
\begin{tikzpicture}[
level distance = 15mm,
level 1/.style={sibling distance=40mm},
level 2/.style={sibling distance=25mm},
level 3/.style={sibling distance=15mm},
every node/.style={draw,circle, minimum size=7mm}]
\node(a) {$a$}
child {
	node(b) {$b$}
	child {
		node(d) {$d$}
		child {
			node(g) {$g$}
			child {
				node(k) {$k$}
				}
			child {node(l) {$l$}}
			}
		}
	}
child {
	node(c){$c$}
	child {
		node(e) {$e$}
		child {	node(h) {$h$}}		
		child {
			node(i) {$i$}
			child {node(m) {$m$}}
			}
		}
	child {
		node(f) {$f$}
		child {
			node(j) {$j$}
			child { node(n) {$n$} }
			}
		}
	};
\node[draw=red, fit=(a),inner sep=1mm,rectangle,rounded corners,label={[red]right:$\mathbf{x}_0$}](x0){};
\node[draw=red, fit=(b) (c),inner sep=1mm,rectangle,rounded corners,label={[red]right:$\mathbf{x}_1$}](x1){};
\node[draw=red, fit=(d) (e),inner sep=1mm,rectangle,rounded corners,label={[red]right:$\mathbf{x}_2$}](x2){};
\node[draw=red, fit=(h) (i),inner sep=1mm,rectangle,rounded corners,label={[red]right:$\mathbf{x}_3$}](x3){};
\node[draw=red, fit=(m),inner sep=1mm,rectangle,rounded corners,label={[red]right:$\mathbf{x}_4$}](x4){};

\node[draw=blue, fit=(b) (c),inner sep=2mm,rectangle,rounded corners,label={[blue]left:$S(\mathbf{x}_0)$}](Sx0){};
\node[draw=blue, fit=(d) (e) (f),inner sep=2mm,rectangle,rounded corners,label={[blue]left:$S(\mathbf{x}_1)$}](Sx1){};
\node[draw=blue, fit=(g) (h) (i),inner sep=2mm,rectangle,rounded corners,label={[blue]left:$S(\mathbf{x}_2)$}](Sx2){};
\node[draw=blue, fit=(m),inner sep=2mm,rectangle,rounded corners,label={[blue]left:$S(\mathbf{x}_3)$}](Sx3){};
\end{tikzpicture}
\caption{Tree for Example \ref{SEPexample}, with each chosen hypernode boxed and labeled to the right, and each chosen hypernode's successor set boxed and labeled to the left}
\label{SEPexamplefigure}
\end{figure}

We initialize with $k=0$, $\mathbf{x}_0=\{a\}$, $D=1$, $C_{\mathrm{SEP}}=1$.  Then we compute $S(\mathbf{x}_0)=\{b,c\}$, which means $H(\mathbf{x}_0)=\{\{b,c\}\}$, and advance to $\mathbf{x}_1=\{b,c\}$ with $P(\mathbf{x}_1)=1$.  We update
\[D\leftarrow\frac{|S(\mathbf{x}_{0})|}{|\mathbf{x}_0|}D=2\]
\[C_{\mathrm{SEP}}\leftarrow C_{\mathrm{SEP}}+D=3\]

We advance to $k=1$ and loop.  We compute $S(\mathbf{x}_1)=\{d,e,f\}$, which means $H(\mathbf{x}_1)=\{\{d,e\},\{d,f\},\{e,f\}\}$, and advance to $\mathbf{x}_2=\{d,e\}$ with $P(\mathbf{x}_2)=\frac{1}{3}$.  We update
\[D\leftarrow\frac{|S(\mathbf{x}_{1})|}{|\mathbf{x}_1|}D=3\]
\[C_{\mathrm{SEP}}\leftarrow C_{\mathrm{SEP}}+D=6\]

We advance to $k=2$ and loop.  We compute $S(\mathbf{x}_2)=\{g,h,i\}$, which means $H(\mathbf{x}_2)=\{\{g,h\},\{g,i\},\{h,i\}\}$, and we advance to $\mathbf{x}_3=\{h,i\}$ with $P(\mathbf{x}_3)=\frac{1}{3}$.  We update
\[D\leftarrow\frac{|S(\mathbf{x}_{2})|}{|\mathbf{x}_2|}D=4.5\]
\[C_{\mathrm{SEP}}\leftarrow C_{\mathrm{SEP}}+D=10.5\]

We advance to $k=3$ and loop.  We compute $S(\mathbf{x}_3)=\{m\}$, which means $H(\mathbf{x}_3)=\{\{m\}\}$, and we advance to $\mathbf{x}_4=\{m\}$ with $P(\mathbf{x}_1)=1$.  We update
\[D\leftarrow\frac{|S(\mathbf{x}_{3})|}{|\mathbf{x}_3|}D=2.25\]
\[C_{\mathrm{SEP}}\leftarrow C_{\mathrm{SEP}}+D=12.75\]

We increase to $k=4$ and loop.  We compute $S(\mathbf{x}_4)=\emptyset$, so we are in the terminal position and we stop.  The algorithm returns $|\mathbf{x}_0|C_{\mathrm{SEP}}=12.75$ as an estimator of the cost of the tree.  This completes the example.\qed
\end{example}

Now we begin our analysis of Algorithm 1.  In general, the output of Algorithm 1 is a random variable
\begin{equation*}
\begin{split}
C_{\mathrm{SEP}}(T_{\mathbf{x}_0}) &= \frac{c(\mathbf{x}_0)}{|\mathbf{x}_0|}+{D_0}\frac{c(\mathbf{x}_1)}{|\mathbf{x}_1|}+{D_0 D_1}\frac{c(\mathbf{x}_2)}{|\mathbf{x}_2|}+\cdots+{D_0 D_1 \cdots D_{\tau-1}}\frac{c(\mathbf{x}_\tau)}{|\mathbf{x}_\tau|}\\
&= \frac{c(\mathbf{x}_0)}{|\mathbf{x}_0|}+{D_0}\left(\frac{c(\mathbf{x}_1)}{|\mathbf{x}_1|}+{D_1}\frac{c(\mathbf{x}_2)}{|\mathbf{x}_2|}+\cdots+{D_2 \cdots D_{\tau-1}}\frac{c(\mathbf{x}_\tau)}{|\mathbf{x}_\tau|}\right)\\
\end{split}
\end{equation*}
where $\tau$ is some height less than or equal to the height of $T_{\mathbf{x}_0}$.

This naturally suggests a recursive formulation of the output,
\[C_{\mathrm{SEP}}(T_{\mathbf{x}_0})= \frac{c(\mathbf{x}_0)}{|\mathbf{x}_0|}+{D_0}C_{\mathrm{SEP}}(T_{\mathbf{x}_1})\]

Let $\mathbf{w}$ be a hyperchild of $\mathbf{v}$ selected from $H(\mathbf{v})$ with probability $P(\mathbf{w})$.  Then we have
\begin{equation}\label{recursiveoutputeqn}
\begin{split}
C_{\mathrm{SEP}}(T_{\mathbf{v}})
&=\frac{c(\mathbf{v})}{|\mathbf{v}|}+{D_0}C_{\mathrm{SEP}}(T_{\mathbf{w}})\\
&=\frac{c(\mathbf{v})}{|\mathbf{v}|}+\frac{|\mathbf{w}|C_{\mathrm{SEP}}(T_{\mathbf{w}})}{|\mathbf{v}|\binom{|S(\mathbf{v})|-1}{|\mathbf{w}|-1}P(\mathbf{w})}
\end{split}
\end{equation}

Before proceeding to a proof of the correctness of Algorithm 1, we stop to note a lemma that we will use in this and other proofs throughout the paper.

\begin{lemma}
\label{sumlemma}
\[\mathrm{Cost}\big(T_{S(\mathbf{v})}\big)=\sum_{\mathbf{w}\in H(\mathbf{v})}\frac{\mathrm{Cost}(T_\mathbf{w})}{\binom{|S(\mathbf{v})|-1}{|\mathbf{w}|-1}}\]
\end{lemma}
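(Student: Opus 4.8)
The plan is to prove the identity by a straightforward double-counting argument: expand both sides using the definitions of $\mathrm{Cost}$, $H(\mathbf{v})$, and $T_{\mathbf{w}}$, and then interchange the order of summation on the right-hand side.

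First I would fix notation by setting $s=|S(\mathbf{v})|$ and $m=\min(B,s)$, so that every $\mathbf{w}\in H(\mathbf{v})$ is an $m$-element subset of $S(\mathbf{v})$. By the definitions of the cost of a forest rooted at a hypernode, $\mathrm{Cost}(T_{S(\mathbf{v})})=\sum_{v\in S(\mathbf{v})}\mathrm{Cost}(T_v)$ and $\mathrm{Cost}(T_{\mathbf{w}})=\sum_{v\in\mathbf{w}}\mathrm{Cost}(T_v)$. Substituting the latter into the right-hand side gives
\[
\sum_{\mathbf{w}\in H(\mathbf{v})}\frac{\mathrm{Cost}(T_{\mathbf{w}})}{\binom{s-1}{m-1}}
=\frac{1}{\binom{s-1}{m-1}}\sum_{\mathbf{w}\in H(\mathbf{v})}\sum_{v\in\mathbf{w}}\mathrm{Cost}(T_v).
\]

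Next I would swap the two sums, reorganizing so that the outer index runs over $v\in S(\mathbf{v})$ and the inner sum runs over those $\mathbf{w}\in H(\mathbf{v})$ with $v\in\mathbf{w}$. The key combinatorial input is that the number of $m$-element subsets of the $s$-element set $S(\mathbf{v})$ containing a fixed element $v$ is exactly $\binom{s-1}{m-1}$: one chooses the remaining $m-1$ members from the other $s-1$ elements. Hence each term $\mathrm{Cost}(T_v)$ occurs $\binom{s-1}{m-1}$ times, this factor cancels the denominator, and the right-hand side collapses to $\sum_{v\in S(\mathbf{v})}\mathrm{Cost}(T_v)=\mathrm{Cost}(T_{S(\mathbf{v})})$, as claimed.

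There is essentially no obstacle here; the only points needing care are the degenerate case $|S(\mathbf{v})|\le B$, where $H(\mathbf{v})=\{S(\mathbf{v})\}$ and $\binom{s-1}{m-1}=\binom{s-1}{s-1}=1$, so the identity is immediate, and making sure the summation interchange and the binomial identity are applied with the correct parameters. We implicitly assume $|S(\mathbf{v})|\ge 1$, which is the only situation in which the lemma is used (otherwise $\mathbf{v}$ is a terminal position).
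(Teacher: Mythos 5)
Your proof is correct and follows essentially the same route as the paper's: expand the right-hand side, pull out the constant binomial coefficient, interchange the sums, and use the fact that each $v\in S(\mathbf{v})$ lies in exactly $\binom{|S(\mathbf{v})|-1}{|\mathbf{w}|-1}$ of the hyperchildren. Your extra remarks on the degenerate case $|S(\mathbf{v})|\le B$ are a harmless bonus not present in the paper.
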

\begin{proof}
We begin by expanding the right hand side of the proposed equation.
\[\sum_{\mathbf{w}\in H(\mathbf{v})}\frac{\mathrm{Cost}(T_\mathbf{w})}{\binom{|S(\mathbf{v})|-1}{|\mathbf{w}|-1}}=\sum_{\mathbf{w}\in H(\mathbf{v})}\frac{1}{\binom{|S(\mathbf{v})|-1}{|\mathbf{w}|-1}}\sum_{w\in\mathbf{w}}\mathrm{Cost}(T_w)\]

Since $|\mathbf{w}|=\mathrm{min}(B,|S(\mathbf{v})|)$ does not depend on the particular choice of $\mathbf{w}$, we can move the factor in which it appears outside the summation.
\begin{equation*}
\begin{split}
\sum_{\mathbf{w}\in H(\mathbf{v})}\frac{\mathrm{Cost}(T_\mathbf{w})}{\binom{|S(\mathbf{v})|-1}{|\mathbf{w}|-1}}
&=\frac{1}{\binom{|S(\mathbf{v})|-1}{|\mathbf{w}|-1}}\sum_{\mathbf{w}\in H(\mathbf{v})}\sum_{w\in\mathbf{w}}\mathrm{Cost}(T_w)\\
\end{split}
\end{equation*}

Each $w\in S(\mathbf{v})$ appears in precisely $\binom{|S(\mathbf{v})|-1}{|\mathbf{w}|-1}$ of the $\mathbf{w}\in H(\mathbf{v})$, therefore we can simplify the double summation.
\begin{equation*}
\begin{split}
\sum_{\mathbf{w}\in H(\mathbf{v})}\frac{\mathrm{Cost}(T_\mathbf{w})}{\binom{|S(\mathbf{v})|-1}{|\mathbf{w}|-1}}
&=\frac{1}{\binom{|S(\mathbf{v})|-1}{|\mathbf{w}|-1}}\binom{|S(\mathbf{v})|-1}{|\mathbf{w}|-1}\sum_{w\in S(\mathbf{v})}\mathrm{Cost}(T_w)\\
&=\sum_{w\in S(\mathbf{v})}\mathrm{Cost}(T_w)\\
&=\mathrm{Cost}(T_{S(\mathbf{v})})
\end{split}
\end{equation*}\qed
\end{proof}

\begin{theorem}
\label{unbiased1}
Algorithm 1 is an unbiased estimator of tree cost, meaning
\[\mathbb{E}\big[C_{\mathrm{SEP}}(T_{\mathbf{v}})\big]=\frac{\mathrm{Cost}(T_\mathbf{v})}{|\mathbf{v}|}\]
\end{theorem}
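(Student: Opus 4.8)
The plan is to induct on the height $h$ of the forest $T_{\mathbf{v}}$, using the recursive identity (\ref{recursiveoutputeqn}) for the estimator together with Lemma~\ref{sumlemma}. Throughout I would assume that the selection rule $P$ assigns positive probability to every $\mathbf{w}\in H(\mathbf{v})$, so that the quantity $D_0$ appearing in line~5 of the algorithm is well defined; this is implicit in the statement of the algorithm. The base case $h=0$ is immediate: then $S(\mathbf{v})=\emptyset$, the algorithm terminates in line~3, and it returns $|\mathbf{v}|C_{\mathrm{SEP}}(T_{\mathbf{v}})=c(\mathbf{v})=\mathrm{Cost}(T_{\mathbf{v}})$, so the identity holds deterministically.

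For the inductive step, suppose the identity holds for every forest of height at most $h-1$, and let $T_{\mathbf{v}}$ have height $h\geq 1$. I would write $\mathbf{w}$ for the random hyperchild drawn in line~4 and condition on its value. Applying (\ref{recursiveoutputeqn}) and the tower property gives
\[
\mathbb{E}\big[C_{\mathrm{SEP}}(T_{\mathbf{v}})\big]=\frac{c(\mathbf{v})}{|\mathbf{v}|}+\sum_{\mathbf{w}\in H(\mathbf{v})}P(\mathbf{w})\,\frac{|\mathbf{w}|\,\mathbb{E}\big[C_{\mathrm{SEP}}(T_{\mathbf{w}})\mid\mathbf{w}\big]}{|\mathbf{v}|\binom{|S(\mathbf{v})|-1}{|\mathbf{w}|-1}P(\mathbf{w})}.
\]
Each $T_{\mathbf{w}}$ has height $h-1$, so the inductive hypothesis replaces $\mathbb{E}[C_{\mathrm{SEP}}(T_{\mathbf{w}})\mid\mathbf{w}]$ by $\mathrm{Cost}(T_{\mathbf{w}})/|\mathbf{w}|$; the factors $P(\mathbf{w})$ and $|\mathbf{w}|$ then cancel, leaving
\[
\mathbb{E}\big[C_{\mathrm{SEP}}(T_{\mathbf{v}})\big]=\frac{c(\mathbf{v})}{|\mathbf{v}|}+\frac{1}{|\mathbf{v}|}\sum_{\mathbf{w}\in H(\mathbf{v})}\frac{\mathrm{Cost}(T_{\mathbf{w}})}{\binom{|S(\mathbf{v})|-1}{|\mathbf{w}|-1}}.
\]
By Lemma~\ref{sumlemma} the remaining sum is exactly $\mathrm{Cost}(T_{S(\mathbf{v})})$, and since distinct nodes at the same level have disjoint subtrees we have $\mathrm{Cost}(T_{\mathbf{v}})=c(\mathbf{v})+\mathrm{Cost}(T_{S(\mathbf{v})})$. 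Substituting this in finishes the induction.

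The computation itself is routine bookkeeping once the recursion is set up; the cancellation of the binomial coefficients and the hypernode sizes is precisely what Lemma~\ref{sumlemma} and the definition of $D_k$ are engineered to produce, so I do not expect any real obstacle there. The one point I would be careful about is the conditioning step: I need that, given $\mathbf{w}$, the variable $C_{\mathrm{SEP}}(T_{\mathbf{w}})$ is distributed exactly as the output of Algorithm 1 applied to $T_{\mathbf{w}}$, so that the inductive hypothesis is applicable. This is what (\ref{recursiveoutputeqn}) encodes --- the algorithm's evolution below level~1 depends on $\mathbf{w}$ alone and not on the path that produced it --- but it is worth stating explicitly. The only other thing to watch is the positivity of $P$ noted above, without which the estimator, and hence the claim, is not defined.
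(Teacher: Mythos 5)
Your proof is correct and follows essentially the same route as the paper's: induction on the height of the forest, expanding the expectation of the recursive identity (\ref{recursiveoutputeqn}) over $\mathbf{w}\in H(\mathbf{v})$, cancelling $P(\mathbf{w})$ and $|\mathbf{w}|$, and finishing with Lemma~\ref{sumlemma}. The only (harmless) imprecision is the claim that each $T_{\mathbf{w}}$ has height exactly $h-1$ --- it has height at most $h-1$, which your stated inductive hypothesis already covers.
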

\begin{proof}
The proof proceeds by induction over the height of the tree.  For a forest of height 0, we have $|S(\mathbf{v})|=0$, so the algorithm returns the exact answer
\[\frac{c(\mathbf{v})}{|\mathbf{v}|} =\frac{\mathrm{Cost}(T_\mathbf{v})}{|\mathbf{v}|} \]

Assuming that the proposition is correct for forests with heights strictly less than the height of $T_\mathbf{v}$, we have
\begin{equation*}
\begin{split}
\mathbb{E}[C_{\mathrm{SEP}}(T_{\mathbf{v}})]
&= \mathbb{E}\left[\frac{c(\mathbf{v})}{|\mathbf{v}|}+\frac{|\mathbf{w}|C_{\mathrm{SEP}}(T_{\mathbf{w}})}{|\mathbf{v}|\binom{|S(\mathbf{v})|-1}{|\mathbf{w}|-1}P(\mathbf{w})}\right]\\
&= \frac{c(\mathbf{v})}{|\mathbf{v}|}+\mathbb{E}\left[\frac{|\mathbf{w}|C_{\mathrm{SEP}}(T_{\mathbf{w}})}{|\mathbf{v}|\binom{|S(\mathbf{v})|-1}{|\mathbf{w}|-1}P(\mathbf{w})}\right]\\
&= \frac{c(\mathbf{v})}{|\mathbf{v}|}+\sum_{\mathbf{w}\in H(\mathbf{v})}P(\mathbf{w})\frac{|\mathbf{w}|\mathbb{E}[C_{\mathrm{SEP}}(T_{\mathbf{w}})]}{|\mathbf{v}|\binom{|S(\mathbf{v})|-1}{|\mathbf{w}|-1}P(\mathbf{w})}\\
&= \frac{c(\mathbf{v})}{|\mathbf{v}|}+\sum_{\mathbf{w}\in H(\mathbf{v})}\frac{|\mathbf{w}|}{|\mathbf{v}|\binom{|S(\mathbf{v})|-1}{|\mathbf{w}|-1}}\mathbb{E}[C_{\mathrm{SEP}}(T_{\mathbf{w}})]\\
&= \frac{c(\mathbf{v})}{|\mathbf{v}|}+\sum_{\mathbf{w}\in H(\mathbf{v})}\frac{|\mathbf{w}|}{|\mathbf{v}|\binom{|S(\mathbf{v})|-1}{|\mathbf{w}|-1}}\frac{\mathrm{Cost}(T_\mathbf{w})}{|\mathbf{w}|}\\
&= \frac{c(\mathbf{v})}{|\mathbf{v}|}+\frac{1}{|\mathbf{v}|}\sum_{\mathbf{w}\in H(\mathbf{v})}\frac{\mathrm{Cost}(T_\mathbf{w})}{\binom{|S(\mathbf{v})|-1}{|\mathbf{w}|-1}}\\
\end{split}
\end{equation*}

Applying Lemma \ref{sumlemma}, we get
\[\mathbb{E}[C_{\mathrm{SEP}}(T_{\mathbf{v}})]
= \frac{c(\mathbf{v})}{|\mathbf{v}|}+\frac{\mathrm{Cost}(T_{S(\mathbf{v})})}{|\mathbf{v}|}
= \frac{\mathrm{Cost}(T_\mathbf{v})}{|\mathbf{v}|}\]
\qed\end{proof}

Now that we know Algorithm 1 works, we can start thinking about how to improve the variance of the estimates it produces.

The purpose of using a non-uniform probability distribution to select each hypernode is to try to achieve a better variance between the estimates.  Therefore, it is important to know the optimal probability distribution, in other words, the probability distribution that would yield the exact answer for every estimate.

As with Knuth's algorithm, it turns out that the optimal probability for choosing a hypernode is proportional to the cost of the forest rooted at the hypernode.  Details are given below.

\begin{theorem}\label{idealdistro}
In Algorithm 1, if each hypernode $\mathbf{w}$ is chosen from all possible hypernodes in $H(\mathbf{v})$ with probability
\[P(\mathbf{w})=\frac{\mathrm{Cost}(T_\mathbf{w})}{\sum\limits_{\mathbf{x}\in H(\mathbf{v})}{\mathrm{Cost}(T_\mathbf{x})}}\]
then $C_{\mathrm{SEP}}$ is a zero-variance estimator, meaning
\[C_{\mathrm{SEP}}(T_{\mathbf{v}})=\frac{\mathrm{Cost}(T_\mathbf{v})}{|\mathbf{v}|}\]
\end{theorem}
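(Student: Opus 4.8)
The plan is to mirror the structure of the proof of Theorem \ref{unbiased1}: induct on the height of the forest $T_{\mathbf{v}}$, using the recursive expression for the output given in Equation (\ref{recursiveoutputeqn}) together with Lemma \ref{sumlemma}. The claim here is stronger — not just that the expectation is correct, but that $C_{\mathrm{SEP}}(T_{\mathbf{v}})$ is \emph{deterministically} equal to $\mathrm{Cost}(T_\mathbf{v})/|\mathbf{v}|$ — so the induction will establish an equality of random variables rather than of expectations, which in fact makes the bookkeeping slightly easier since no conditioning argument is needed.

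For the base case, a forest of height $0$ has $|S(\mathbf{v})|=0$, the algorithm terminates immediately at initialization, and it returns $c(\mathbf{v})/|\mathbf{v}| = \mathrm{Cost}(T_\mathbf{v})/|\mathbf{v}|$ exactly. For the inductive step, I would assume the result holds for all forests of height strictly less than that of $T_{\mathbf{v}}$, let $\mathbf{w}\in H(\mathbf{v})$ be the (random) hyperchild selected, and start from
\[C_{\mathrm{SEP}}(T_{\mathbf{v}})=\frac{c(\mathbf{v})}{|\mathbf{v}|}+\frac{|\mathbf{w}|\,C_{\mathrm{SEP}}(T_{\mathbf{w}})}{|\mathbf{v}|\binom{|S(\mathbf{v})|-1}{|\mathbf{w}|-1}P(\mathbf{w})}.\]
Since $T_{\mathbf{w}}$ has strictly smaller height, the inductive hypothesis replaces $C_{\mathrm{SEP}}(T_{\mathbf{w}})$ by $\mathrm{Cost}(T_{\mathbf{w}})/|\mathbf{w}|$, cancelling the $|\mathbf{w}|$ factors. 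Substituting the prescribed $P(\mathbf{w})=\mathrm{Cost}(T_\mathbf{w})\big/\sum_{\mathbf{x}\in H(\mathbf{v})}\mathrm{Cost}(T_\mathbf{x})$ then cancels the $\mathrm{Cost}(T_{\mathbf{w}})$ factors as well, leaving
\[C_{\mathrm{SEP}}(T_{\mathbf{v}})=\frac{c(\mathbf{v})}{|\mathbf{v}|}+\frac{1}{|\mathbf{v}|\binom{|S(\mathbf{v})|-1}{|\mathbf{w}|-1}}\sum_{\mathbf{x}\in H(\mathbf{v})}\mathrm{Cost}(T_\mathbf{x}).\]
Because $|\mathbf{x}|=\min(B,|S(\mathbf{v})|)$ is the same for every $\mathbf{x}\in H(\mathbf{v})$, the binomial coefficient does not depend on the summation index and can be pushed inside the sum, turning the right-hand term into $\frac{1}{|\mathbf{v}|}\sum_{\mathbf{x}\in H(\mathbf{v})}\mathrm{Cost}(T_\mathbf{x})\big/\binom{|S(\mathbf{v})|-1}{|\mathbf{x}|-1}$, which by Lemma \ref{sumlemma} equals $\mathrm{Cost}(T_{S(\mathbf{v})})/|\mathbf{v}|$. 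Adding the $c(\mathbf{v})/|\mathbf{v}|$ term and using $\mathrm{Cost}(T_\mathbf{v})=c(\mathbf{v})+\mathrm{Cost}(T_{S(\mathbf{v})})$ finishes the step. Note that crucially the dependence on the \emph{particular} $\mathbf{w}$ drops out entirely, which is exactly why the estimator has zero variance.

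The one point requiring care, rather than a genuine obstacle, is well-definedness of the prescribed distribution: if $\mathrm{Cost}(T_\mathbf{w})=0$ then $P(\mathbf{w})=0$ and the $D_k$ update divides by zero. The clean fix is to observe that such a $\mathbf{w}$ is chosen with probability $0$, so with probability $1$ every selected hyperchild has positive forest cost and the algebra above goes through; in the degenerate case where every $\mathbf{x}\in H(\mathbf{v})$ has $\mathrm{Cost}(T_\mathbf{x})=0$ one has $\mathrm{Cost}(T_{S(\mathbf{v})})=0$ as well, so the tail of the tree contributes nothing and the estimate $c(\mathbf{v})/|\mathbf{v}|$ is already exact. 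I would state this caveat briefly and otherwise present the induction as above.
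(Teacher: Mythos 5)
Your proof is correct and follows essentially the same route as the paper's: induction on the height of the forest, substituting the inductive hypothesis and the prescribed $P(\mathbf{w})$ into the recursive formula (\ref{recursiveoutputeqn}), and finishing with Lemma \ref{sumlemma}. Your added remark about the degenerate case $\mathrm{Cost}(T_\mathbf{w})=0$ is a reasonable extra precaution that the paper omits, but it does not change the argument.
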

\begin{proof}
The proof proceeds by induction over the height of the tree.  For a tree of height 0, we have $|S(\mathbf{v})|=0$, so the algorithm returns the exact answer
\[C_{\mathrm{SEP}}(T_{\mathbf{v}})=\frac{c(\mathbf{v})}{|\mathbf{v}|} =\frac{\mathrm{Cost}(T_\mathbf{v})}{|\mathbf{v}|} \]

Assuming that the proposition is correct for forests with heights strictly less than the height of $T_\mathbf{v}$, we have
\begin{equation*}
\begin{split}
C_{\mathrm{SEP}}(T_{\mathbf{v}})
&=\frac{c(\mathbf{v})}{|\mathbf{v}|}+\frac{|\mathbf{w}|C_{\mathrm{SEP}}(T_{\mathbf{w}})}{|\mathbf{v}|\binom{|S(\mathbf{v})|-1}{|\mathbf{w}|-1}P(\mathbf{w})}\\
&=\frac{c(\mathbf{v})}{|\mathbf{v}|}+\frac{\mathrm{Cost}(T_\mathbf{w})}{|\mathbf{v}|\binom{|S(\mathbf{v})|-1}{|\mathbf{w}|-1}P(\mathbf{w})}\\
&=\frac{c(\mathbf{v})}{|\mathbf{v}|}+\frac{\mathrm{Cost}(T_\mathbf{w})}{|\mathbf{v}|\binom{|S(\mathbf{v})|-1}{|\mathbf{w}|-1}\mathrm{Cost}(T_\mathbf{w})}\sum\limits_{\mathbf{x}\in H(\mathbf{v})}{\mathrm{Cost}(T_\mathbf{x})}\\
&=\frac{c(\mathbf{v})}{|\mathbf{v}|}+\frac{1}{|\mathbf{v}|\binom{|S(\mathbf{v})|-1}{|\mathbf{w}|-1}}\sum\limits_{\mathbf{x}\in H(\mathbf{v})}{\mathrm{Cost}(T_\mathbf{x})}\\
&=\frac{c(\mathbf{v})}{|\mathbf{v}|}+\frac{1}{|\mathbf{v}|}\sum\limits_{\mathbf{x}\in H(\mathbf{v})}{\frac{\mathrm{Cost}(T_\mathbf{x})}{\binom{|S(\mathbf{v})|-1}{|\mathbf{w}|-1}}}\\
\end{split}
\end{equation*}

Applying Lemma \ref{sumlemma}, we get
\[C_{\mathrm{SEP}}(T_{\mathbf{v}})= \frac{c(\mathbf{v})}{|\mathbf{v}|}+\frac{\mathrm{Cost}(T_{S(\mathbf{v})})}{|\mathbf{v}|}
= \frac{\mathrm{Cost}(T_\mathbf{v})}{|\mathbf{v}|}\]
\qed\end{proof}

We are now ready to discuss using an importance function to implement a probability distribution.

\section{Stochastic Enumeration with Importance}

The information in Theorem \ref{idealdistro} suggests that we should use a probability distribution in which each hypernode has a probability that is proportional to the cost of the forest beginning at that hypernode.  Obviously this will be difficult to achieve even as an estimate, since it is the same problem that we are trying to address with our algorithms.

However, even supposing that we did have some way of estimating the ideal probability for each hypernode, there is another problem with trying to implement a non-uniform probability distribution on the hypernodes.  Simply put, $|H(\mathbf{v})|=\binom{|S(\mathbf{v})|}{|\mathbf{w}|}$ may be extremely large, and so, if we hope to keep the running time of the algorithm under control, we need a way of choosing hypernodes that does not require us to calculate or store the probability of each individual hypernode in $H(\mathbf{v})$.

It turns out that there is an easy way to do this.  Consider a function $r$ from the nodes of a tree to the positive real numbers.  For a node $x$, we will call $r(x)$ the \textit{weight} of $x$ or the \textit{importance} of $x$.  We can extend the domain of $r$ to sets of nodes by defining the weight of a set of nodes $\mathbf{x}=\{x_1,x_2,\ldots,x_m\}$ as $r(\mathbf{x})=r(x_1)+r(x_2)+\cdots+r(x_m)$.

Given this weighting scheme, there is a way to choose a hypernode $\mathbf{w}$ with probability
\[P(\mathbf{w})=\frac{r(\mathbf{w})}{\sum\limits_{\mathbf{x}\in H(\mathbf{v})}{r(\mathbf{x})}}\]
that only requires us to calculate the weights of $S(\mathbf{v})$, and not of $H(\mathbf{v})$.  This method is described in Algorithm 2.

\begin{algorithm}[H]
\DontPrintSemicolon
\SetKwInOut{Input}{Input}\SetKwInOut{Output}{Output}
\Input{A forest $T_\mathbf{v}$ of height $h$ rooted at a hypernode $\mathbf{v}$, a budget $B\in\mathbb{N}$, and an importance function $r$}
\Output{An unbiased estimator $|\mathbf{v}|C_{\mathrm{SEI}}$ of the total cost of the forest $T_\mathbf{v}$}
\BlankLine
\textbf{(Initialization):} Set $k\leftarrow 0$, $D\leftarrow 1$, $\mathbf{x}_0=\mathbf{v}$, and $C_{\mathrm{SEI}}\leftarrow c(\mathbf{x}_0)/|\mathbf{x}_0|$.\;
\textbf{(Compute the successors):} Let $S(\mathbf{x}_k)$ be the set of all successors of $\mathbf{x}_k$.\;
\textbf{(Terminal position?):} If $|S(\mathbf{x}_k)|=0$, the algorithm stops, returning $|\mathbf{v}|C_{\mathrm{SEI}}$ as an estimator of $\mathrm{Cost}(T_\mathbf{v})$.\;
\textbf{(Advance):} Choose hypernode $\mathbf{x}_{k+1}\in H(\mathbf{x}_k)$ by first selecting $x\in S(\mathbf{x}_k)$ with probability ${r(x)}/{r(S(\mathbf{x}_k))}$ and then selecting the remaining elements of $\mathbf{x}_{k+1}$ uniformly at random from $S(\mathbf{x}_k)\setminus \{x\}$.\;
\textbf{(Update):} Set $D_k\leftarrow \frac{|\mathbf{x}_{k+1}|}{|\mathbf{x}_k|}\frac{r(S(\mathbf{x}_k))}{r(\mathbf{x}_{k+1})}$, set $D\leftarrow D\cdot D_k$, and set $C_{\mathrm{SEI}}\leftarrow C_{\mathrm{SEI}}+\frac{c(\mathbf{x}_{k+1})}{|\mathbf{x}_{k+1}|}D$.\;
\textbf{(Loop):} Increase $k$ by 1 and return to Step 2.\;
\caption{Stochastic Enumeration with importance sampling (SEI) algorithm for estimating the cost of a backtrack tree}
\end{algorithm}
\bigskip
It may not be obvious, but Algorithm 2 is simply Algorithm 1 with a specific probability distribution implemented, as we shall prove now.

\begin{theorem}\label{alg2unbiased}
Algorithm 2 is an unbiased estimator of tree cost, meaning
\[\mathbb{E}[C_{\mathrm{SEI}}(T_{\mathbf{v}})]=\frac{\mathrm{Cost}(T_\mathbf{v})}{|\mathbf{v}|}\]
\end{theorem}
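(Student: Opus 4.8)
The plan is to recognize that Algorithm 2 is just Algorithm 1 with the concrete hypernode-selection distribution
\[
P(\mathbf{w}) \;=\; \frac{r(\mathbf{w})}{\sum_{\mathbf{x}\in H(\mathbf{v})} r(\mathbf{x})}
\]
plugged in, after which unbiasedness is inherited verbatim from Theorem \ref{unbiased1}. Concretely, two things have to be checked: (i) the two-stage sampling in the Advance step of Algorithm 2 realizes exactly this distribution on $H(\mathbf{x}_k)$, and (ii) the quantity $D_k$ set in the Update step of Algorithm 2 is precisely the $D_k$ of Algorithm 1 specialized to this $P$. Every other line of the two algorithms (initialization, successor computation, terminal test, and the $C$-update) is already identical.

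For (i) I would fix a hyperchild $\mathbf{w}\in H(\mathbf{v})$, write $m:=|\mathbf{w}|=\min(B,|S(\mathbf{v})|)$, and compute the probability that Algorithm 2 outputs $\mathbf{w}$. The set $\mathbf{w}$ is produced exactly when the weighted first draw $x$ lands in $\mathbf{w}$ \emph{and} the subsequent uniform choice of the remaining $m-1$ elements of $S(\mathbf{v})\setminus\{x\}$ happens to give $\mathbf{w}\setminus\{x\}$; since there are $\binom{|S(\mathbf{v})|-1}{m-1}$ equally likely $(m-1)$-subsets of $S(\mathbf{v})\setminus\{x\}$ regardless of which $x$ was drawn, summing over the $m$ possible ``first'' nodes $x\in\mathbf{w}$ yields
\[
\sum_{x\in\mathbf{w}}\frac{r(x)}{r(S(\mathbf{v}))}\cdot\frac{1}{\binom{|S(\mathbf{v})|-1}{m-1}}
\;=\;\frac{r(\mathbf{w})}{r(S(\mathbf{v}))\binom{|S(\mathbf{v})|-1}{m-1}}.
\]
A one-line double-count — each $w\in S(\mathbf{v})$ lies in exactly $\binom{|S(\mathbf{v})|-1}{m-1}$ of the hypernodes of $H(\mathbf{v})$, which is the statement of Lemma \ref{sumlemma} read with $r$ in place of the cost function — gives $\sum_{\mathbf{x}\in H(\mathbf{v})} r(\mathbf{x}) = r(S(\mathbf{v}))\binom{|S(\mathbf{v})|-1}{m-1}$, so this probability equals $P(\mathbf{w})$ above and these weights sum to one; the degenerate case $|S(\mathbf{v})|\le B$, where $H(\mathbf{v})=\{S(\mathbf{v})\}$, is immediate. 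For (ii) I substitute this $P(\mathbf{w})$ into $D_k=\frac{|\mathbf{w}|}{|\mathbf{v}|}\binom{|S(\mathbf{v})|-1}{|\mathbf{w}|-1}^{-1}P(\mathbf{w})^{-1}$; the binomial coefficients cancel, leaving $D_k=\frac{|\mathbf{w}|}{|\mathbf{v}|}\cdot\frac{r(S(\mathbf{v}))}{r(\mathbf{w})}$, exactly the Update step of Algorithm 2. Hence Algorithm 2 is a special case of Algorithm 1, and Theorem \ref{unbiased1} gives $\mathbb{E}[C_{\mathrm{SEI}}(T_\mathbf{v})]=\mathrm{Cost}(T_\mathbf{v})/|\mathbf{v}|$.

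The only real obstacle is the bookkeeping in step (i): one must not forget that a given set $\mathbf{w}$ can arise from the two-stage procedure with \emph{any} of its $m$ elements in the role of the weighted first draw, so its probability is a sum of $m$ terms, and one must confirm the number $\binom{|S(\mathbf{v})|-1}{m-1}$ of completions is independent of which first element was chosen. Once that is in hand, the rest is algebraic cancellation plus the appeal to the already-established Theorem \ref{unbiased1}. (If one prefers to avoid invoking Theorem \ref{unbiased1} entirely, the same computation feeds directly into a standalone induction on tree height, using the recursive form (\ref{recursiveoutputeqn}) and Lemma \ref{sumlemma}, but reusing Theorem \ref{unbiased1} is cleaner.)
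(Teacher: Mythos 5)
Your proof is correct and takes essentially the same route as the paper: compute the probability of a hypernode under the two-stage draw by summing over which of its $|\mathbf{w}|$ elements served as the weighted first pick, observe that substituting this $P(\mathbf{w})$ into the $D_k$ of Algorithm 1 reproduces the Update step of Algorithm 2, and then inherit unbiasedness from Theorem \ref{unbiased1}. The extra verification that this probability equals $r(\mathbf{w})/\sum_{\mathbf{x}\in H(\mathbf{v})} r(\mathbf{x})$ is also present in the paper, just placed in the discussion immediately after the proof rather than inside it.
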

\begin{proof}
We begin by calculating the probability with which each $\mathbf{x}_{k+1}$ is being selected.

Since one element, $x$, is selected separately from the rest of $\mathbf{x}_{k+1}$, there are $|\mathbf{x}_{k+1}|$ different and mutually exclusive ways in which we can get the same $\mathbf{x}_{k+1}$.  This is because each element in $\mathbf{x}_{k+1}$ can play the role of $x$.

Once an $x$ has been selected from $S(\mathbf{x}_k)$ with probability ${r(x)}/{r(S(\mathbf{x}_k))}$, the rest of the elements are selected uniformly at random from the remaining elements in $S(\mathbf{x}_k)$, so the remaining elements are collectively selected with probability $1/\binom{|S(\mathbf{x}_k)|-1}{|\mathbf{x}_{k+1}|-1}$.

Therefore the probability with which any given $\mathbf{x}_{k+1}$ is selected is
\[P(\mathbf{x}_{k+1})=\sum_{x\in\mathbf{x}_{k+1}}\frac{r(x)}{r(S(\mathbf{x}_k))}\frac{1}{\binom{|S(\mathbf{x}_k)|-1}{|\mathbf{x}_{k+1}|-1}}=\frac{r(\mathbf{x}_{k+1})}{r(S(\mathbf{x}_k))}\frac{1}{\binom{|S(\mathbf{x}_k)|-1}{|\mathbf{x}_{k+1}|-1}}\]

The formula for $D_k$ in Algorithm 2 is then obtained by a simple substitution into the formula given in Algorithm 1, and so the proposition follows from Theorem \ref{unbiased1}.
\qed\end{proof}

Let $\mathbf{w}$ be selected from $H(\mathbf{v})$ as described in Algorithm 2.  Then the probability with which $\mathbf{w}$ is selected is
\begin{equation*}
\begin{split}
P(\mathbf{w})
&=\frac{r(\mathbf{w})}{r(S(\mathbf{v}))}\frac{1}{\binom{|S(\mathbf{v})|-1}{|\mathbf{w}|-1}}\\
&=\frac{r(\mathbf{w})}{\sum\limits_{x\in S(\mathbf{v})}r(x)\binom{|S(\mathbf{v})|-1}{|\mathbf{w}|-1}}\\
\end{split}
\end{equation*}

Since each $x\in S(\mathbf{v})$ appears in precisely $\binom{|S(\mathbf{v})|-1}{|\mathbf{w}|-1}$ of the $\mathbf{x}\in H(\mathbf{v})$, we can also write this as
\[P(\mathbf{w})=\frac{r(\mathbf{w})}{\sum\limits_{\mathbf{x}\in H(\mathbf{v})}{r(\mathbf{x})}}\]
which was the desired probability.  Clearly, from Theorem \ref{idealdistro}, the ideal importance function would be $r(\mathbf{w})=\mathrm{Cost}(T_\mathbf{w})$.

Before analyzing this algorithm any further, let's look at an example to get a better idea of how it works.  

\begin{example}
\label{SEIexample}
Consider the tree in Figure \ref{SEIexamplefigure}, which is the same as that in the previous examples, except that it has been labeled with importance function values in addition to the names of the nodes.

To keep things simple, we are reusing as many parameters as possible from Example \ref{SEPexample}, so the budget is $B=2$ and the cost function $c$ is 1 on every node.  Again, the total cost of the tree is the number of nodes, 14, and this choice simplifies $\frac{c(\mathbf{x}_{k+1})}{|\mathbf{x}_{k+1}|}$ to 1, so the update command for $C_\mathrm{SEI}$ becomes 
\[C_{\mathrm{SEI}}\leftarrow C_{\mathrm{SEI}}+D\]

The importance function we are using for each node $x$ is the number of leaves under $x$, including $x$ itself if it is a leaf.  We have labeled the importance of each node after the node's name in the figure.

Now let's examine a possible sequence of hypernodes produced by Algorithm 2, as shown in the figure.

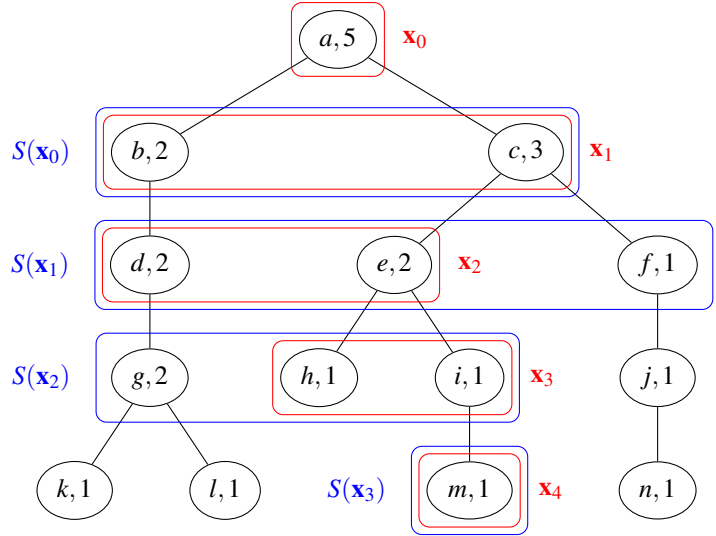
\begin{figure}
\centering
\begin{tikzpicture}[
level distance = 15mm,
level 1/.style={sibling distance=50mm},
level 2/.style={sibling distance=35mm},
level 3/.style={sibling distance=20mm},
every node/.style={draw,ellipse, minimum size=5mm}]
\node(a) {$a, 5$}
child {
	node(b) {$b,2$}
	child {
		node(d) {$d, 2$}
		child {
			node(g) {$g, 2$}
			child {
				node(k) {$k, 1$}
				}
			child {node(l) {$l, 1$}}
			}
		}
	}
child {
	node(c){$c, 3$}
	child {
		node(e) {$e, 2$}
		child {	node(h) {$h, 1$}}		
		child {
			node(i) {$i, 1$}
			child {node(m) {$m, 1$}}
			}
		}
	child {
		node(f) {$f, 1$}
		child {
			node(j) {$j, 1$}
			child { node(n) {$n, 1$} }
			}
		}
	};
\node[draw=red, fit=(a),inner sep=1mm,rectangle,rounded corners,label={[red]right:$\mathbf{x}_0$}](x0){};
\node[draw=red, fit=(b) (c),inner sep=1mm,rectangle,rounded corners,label={[red]right:$\mathbf{x}_1$}](x1){};
\node[draw=red, fit=(d) (e),inner sep=1mm,rectangle,rounded corners,label={[red]right:$\mathbf{x}_2$}](x2){};
\node[draw=red, fit=(h) (i),inner sep=1mm,rectangle,rounded corners,label={[red]right:$\mathbf{x}_3$}](x3){};
\node[draw=red, fit=(m),inner sep=1mm,rectangle,rounded corners,label={[red]right:$\mathbf{x}_4$}](x4){};

\node[draw=blue, fit=(b) (c),inner sep=2mm,rectangle,rounded corners,label={[blue]left:$S(\mathbf{x}_0)$}](Sx0){};
\node[draw=blue, fit=(d) (e) (f),inner sep=2mm,rectangle,rounded corners,label={[blue]left:$S(\mathbf{x}_1)$}](Sx1){};
\node[draw=blue, fit=(g) (h) (i),inner sep=2mm,rectangle,rounded corners,label={[blue]left:$S(\mathbf{x}_2)$}](Sx2){};
\node[draw=blue, fit=(m),inner sep=2mm,rectangle,rounded corners,label={[blue]left:$S(\mathbf{x}_3)$}](Sx3){};
\end{tikzpicture}
\caption{Tree for Example \ref{SEIexample}, with each chosen hypernode boxed and labeled to the right, and each chosen hypernode's successor set boxed and labeled to the left}
\label{SEIexamplefigure}
\end{figure}

We initialize with $k=0$, $\mathbf{x}_0=\{a\}$, $D=1$, $C_{\mathrm{SEI}}=1$.  Then we compute $S(\mathbf{x}_0)=\{b,c\}$.  We choose $c$ with probability
\[P(c)=\frac{r(c)}{r(S(\mathbf{x}_0))}=\frac{3}{2+3}=\frac{3}{5}\]
and then choose $b$ uniformly at random from the remaining elements, to give us $\mathbf{x}_1=\{b,c\}$.  We update
\[D_0\leftarrow \frac{|\mathbf{x}_{1}|}{|\mathbf{x}_0|}\frac{r(S(\mathbf{x}_0))}{r(\mathbf{x}_{1})}=\frac{2}{1}\cdot\frac{2+3}{2+3}=2\]
\[D\leftarrow D\cdot D_0=2\]
\[C_{\mathrm{SEI}}\leftarrow C_{\mathrm{SEI}}+D=3\]

We increase to $k=1$ and loop.  We compute $S(\mathbf{x}_1)=\{d,e,f\}$.  We choose $e$ with probability
\[P(e)=\frac{r(e)}{r(S(\mathbf{x}_1))}=\frac{2}{2+2+1}=\frac{2}{5}\]
and then choose $d$ uniformly at random from the remaining elements, giving us $\mathbf{x}_2=\{d,e\}$.  We update
\[D_1\leftarrow \frac{|\mathbf{x}_{2}|}{|\mathbf{x}_1|}\frac{r(S(\mathbf{x}_1))}{r(\mathbf{x}_{2})}=\frac{2}{2}\cdot\frac{2+2+1}{2+2}=\frac{5}{4}\]
\[D\leftarrow D\cdot D_1=\frac{5}{2}\]
\[C_{\mathrm{SEI}}\leftarrow C_{\mathrm{SEI}}+D=\frac{11}{2}\]

We increase to $k=2$ and loop.  We compute $S(\mathbf{x}_2)=\{g,h,i\}$.  We choose $i$ with probability
\[P(i)=\frac{r(i)}{r(S(\mathbf{x}_2))}=\frac{1}{2+1+1}=\frac{1}{4}\]
and then choose $h$ uniformly at random from the remaining elements, giving us $\mathbf{x}_2=\{h,i\}$.  We update
\[D_2\leftarrow \frac{|\mathbf{x}_{3}|}{|\mathbf{x}_2|}\frac{r(S(\mathbf{x}_2))}{r(\mathbf{x}_{3})}=\frac{2}{2}\cdot\frac{2+1+1}{1+1}=2\]
\[D\leftarrow D\cdot D_2=5\]
\[C_{\mathrm{SEI}}\leftarrow C_{\mathrm{SEI}}+D=\frac{21}{2}\]

We increase to $k=3$ and loop.  We compute $S(\mathbf{x}_3)=\{m\}$, and we choose $m$ with probability
\[P(m)=\frac{r(m)}{r(S(\mathbf{x}_3))}=\frac{1}{1}=1\]
Since there are no remaining elements to be chosen, we have $\mathbf{x}_4=\{m\}$.  We then update
\[D_3\leftarrow \frac{|\mathbf{x}_{4}|}{|\mathbf{x}_3|}\frac{r(S(\mathbf{x}_3))}{r(\mathbf{x}_{4})}=\frac{1}{2}\frac{1}{1}=\frac{1}{2}\]
\[D\leftarrow D\cdot D_3=\frac{5}{2}\]
\[C_{\mathrm{SEI}}\leftarrow C_{\mathrm{SEI}}+D=\frac{26}{2}=13\]

We increase to $k=4$ and loop.  We compute $S(\mathbf{x}_4)=\emptyset$, so we are in the terminal position and we stop.  The algorithm returns $|\mathbf{x}_0|C_{\mathrm{SEP}}=13$ as an estimator of the cost of the tree.  This completes the example.\qed
\end{example}

\section{Variance}

Recall that in Equation \ref{recursiveoutputeqn}, we found a recursive expression for the output of Algorithm 1 as

\[C_{\mathrm{SEP}}(T_{\mathbf{v}})=\frac{c(\mathbf{v})}{|\mathbf{v}|}+\frac{|\mathbf{w}|C_{\mathrm{SEP}}(T_{\mathbf{w}})}{|\mathbf{v}|\binom{|S(\mathbf{v})|-1}{|\mathbf{w}|-1}P(\mathbf{w})}\]

By substituting for $P(\mathbf{w})$ with the expression we found in the proof of Theorem \ref{alg2unbiased}, we get another recursive formula for the output of Algorithm 2.
\[C_{\mathrm{SEI}}(T_{\mathbf{v}})
=\frac{c(\mathbf{v})}{|\mathbf{v}|}+\frac{|\mathbf{w}|}{|\mathbf{v}|}\frac{r(S(\mathbf{v}))}{r(\mathbf{w})}C_{\mathrm{SEI}}(T_{\mathbf{w}})\]

With this information we can begin to analyze the variance of $C_{\mathrm{SEI}}$, or rather, the variance of $|\mathbf{v}|C_{\mathrm{SEI}}$, which is the actual estimate of tree cost produced by Algorithm 2.

\begin{theorem}
\label{alg2var}
For a forest $T_\mathbf{v}$ rooted at a hypernode $\mathbf{v}$, the variance produced by Algorithm 2 is
\begin{multline*}
\mathrm{Var}\big(|\mathbf{v}|C_{\mathrm{SEI}}(T_\mathbf{v})\big)\\
=\sum_{\mathbf{w}\in H(\mathbf{v})}\frac{1}{\binom{|S(\mathbf{v})|-1}{|\mathbf{w}|-1}}\frac{r\big(S(\mathbf{v})\big)}{r(\mathbf{w})}\Big(\mathrm{Var}\big(|\mathbf{w}|C_{\mathrm{SEI}}(T_\mathbf{w})\big)+\mathrm{Cost}(T_\mathbf{w})^2\Big)\\
-\mathrm{Cost}(T_{S(\mathbf{v})})^2
\end{multline*}
\end{theorem}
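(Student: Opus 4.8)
The plan is to derive the identity in a single step directly from the recursive output formula
\[C_{\mathrm{SEI}}(T_{\mathbf{v}})=\frac{c(\mathbf{v})}{|\mathbf{v}|}+\frac{|\mathbf{w}|}{|\mathbf{v}|}\frac{r(S(\mathbf{v}))}{r(\mathbf{w})}C_{\mathrm{SEI}}(T_{\mathbf{w}})\]
displayed just above the theorem; no induction on the height is needed, since the asserted formula is itself recursive. Write $Y_{\mathbf{v}}=|\mathbf{v}|C_{\mathrm{SEI}}(T_{\mathbf{v}})$ for the actual estimate, so that multiplying the recursion by $|\mathbf{v}|$ gives $Y_{\mathbf{v}}=c(\mathbf{v})+\frac{r(S(\mathbf{v}))}{r(\mathbf{w})}Y_{\mathbf{w}}$, where $\mathbf{w}=\mathbf{x}_1$ is the random first hyperchild. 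Dispose of the terminal case $|S(\mathbf{v})|=0$ first: the algorithm returns $c(\mathbf{v})$ deterministically, so the left side is $0$, and the right side is $0$ as well once the sum is read as empty and $\mathrm{Cost}(T_{S(\mathbf{v})})=0$. For the rest, assume $|S(\mathbf{v})|\ge 1$, so $|\mathbf{w}|=\min(B,|S(\mathbf{v})|)\ge 1$, every binomial coefficient that appears is positive, and $r(\mathbf{w})>0$ because $r$ is positive-valued and $\mathbf{w}\neq\emptyset$.

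Because $c(\mathbf{v})$ is a constant, $\mathrm{Var}(Y_{\mathbf{v}})=\mathrm{Var}(Z)$ with $Z=\frac{r(S(\mathbf{v}))}{r(\mathbf{w})}Y_{\mathbf{w}}$, so only the first two moments of $Z$ are needed. For the mean, either apply Theorem \ref{alg2unbiased} to $Y_{\mathbf{v}}$ and use $\mathrm{Cost}(T_{\mathbf{v}})=c(\mathbf{v})+\mathrm{Cost}(T_{S(\mathbf{v})})$ to conclude $\mathbb{E}[Z]=\mathrm{Cost}(T_{S(\mathbf{v})})$, or expand $\mathbb{E}[Z]=\sum_{\mathbf{w}\in H(\mathbf{v})}P(\mathbf{w})\frac{r(S(\mathbf{v}))}{r(\mathbf{w})}\mathrm{Cost}(T_{\mathbf{w}})$ and invoke Lemma \ref{sumlemma}. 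For the second moment, condition on $\mathbf{w}$. The key structural point is that, given $\mathbf{x}_1=\mathbf{w}$, the remainder of the run of Algorithm 2 is a fresh independent run of Algorithm 2 on the forest $T_{\mathbf{w}}$; hence $\mathbb{E}[Y_{\mathbf{w}}^2\mid\mathbf{w}]=\mathrm{Var}\big(|\mathbf{w}|C_{\mathrm{SEI}}(T_{\mathbf{w}})\big)+\mathrm{Cost}(T_{\mathbf{w}})^2$, using Theorem \ref{alg2unbiased} once more for the conditional mean. Plugging in the selection probability from the proof of Theorem \ref{alg2unbiased}, namely $P(\mathbf{w})=\frac{r(\mathbf{w})}{r(S(\mathbf{v}))}\binom{|S(\mathbf{v})|-1}{|\mathbf{w}|-1}^{-1}$, yields
\[\mathbb{E}[Z^2]=\sum_{\mathbf{w}\in H(\mathbf{v})}P(\mathbf{w})\frac{r(S(\mathbf{v}))^2}{r(\mathbf{w})^2}\,\mathbb{E}[Y_{\mathbf{w}}^2\mid\mathbf{w}]=\sum_{\mathbf{w}\in H(\mathbf{v})}\frac{1}{\binom{|S(\mathbf{v})|-1}{|\mathbf{w}|-1}}\frac{r(S(\mathbf{v}))}{r(\mathbf{w})}\Big(\mathrm{Var}\big(|\mathbf{w}|C_{\mathrm{SEI}}(T_{\mathbf{w}})\big)+\mathrm{Cost}(T_{\mathbf{w}})^2\Big),\]
where one factor of $r(\mathbf{w})$ cancels against $P(\mathbf{w})$.

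Subtracting, $\mathrm{Var}(Y_{\mathbf{v}})=\mathbb{E}[Z^2]-\mathbb{E}[Z]^2=\mathbb{E}[Z^2]-\mathrm{Cost}(T_{S(\mathbf{v})})^2$, which is precisely the claimed formula. I expect the only non-routine ingredient to be the justification of the conditional second-moment identity: one must argue carefully that conditioning on the value of $\mathbf{x}_1$ leaves the subsequent evolution distributed exactly as an unconditional run of Algorithm 2 on $T_{\mathbf{w}}$ (the regeneration/Markov property of the recursion), so that its conditional variance and mean equal the corresponding unconditional quantities; once that is granted, the remainder is the bookkeeping above, with the cross term $2c(\mathbf{v})Z$ contributing nothing to the variance.
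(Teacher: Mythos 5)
Your proposal is correct and follows essentially the same route as the paper: drop the additive constant $c(\mathbf{v})$, compute the first moment of $\frac{r(S(\mathbf{v}))}{r(\mathbf{w})}|\mathbf{w}|C_{\mathrm{SEI}}(T_{\mathbf{w}})$ via the selection probability and Lemma \ref{sumlemma}, compute the second moment by conditioning on $\mathbf{w}$ and using unbiasedness, and subtract. Your added remarks on the terminal case and the regeneration property are points the paper leaves implicit, but they do not change the argument.
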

\begin{proof}
We know
\[C_{\mathrm{SEI}}(T_{\mathbf{v}})
=\frac{c(\mathbf{v})}{|\mathbf{v}|}+\frac{|\mathbf{w}|}{|\mathbf{v}|}\frac{r(S(\mathbf{v}))}{r(\mathbf{w})}C_{\mathrm{SEI}}(T_{\mathbf{w}})\]
which implies
\[|\mathbf{v}|C_{\mathrm{SEI}}(T_{\mathbf{v}})
=c(\mathbf{v})+\frac{r(S(\mathbf{v}))}{r(\mathbf{w})}|\mathbf{w}|C_{\mathrm{SEI}}(T_{\mathbf{w}})\]

Taking the variance of both sides, we get
\begin{equation*}
\begin{split}
\mathrm{Var}\big(|\mathbf{v}|C_{\mathrm{SEI}}(T_\mathbf{v})\big)
&=\mathrm{Var}\left(c(\mathbf{v})+\frac{r(S(\mathbf{v}))}{r(\mathbf{w})}|\mathbf{w}|C_{\mathrm{SEI}}(T_{\mathbf{w}})\right)\\
&=\mathrm{Var}\left(\frac{r(S(\mathbf{v}))}{r(\mathbf{w})}|\mathbf{w}|C_{\mathrm{SEI}}(T_{\mathbf{w}})\right)
\end{split}
\end{equation*}
and so
\begin{multline}\label{bigvariance}
\mathrm{Var}\big(|\mathbf{v}|C_{\mathrm{SEI}}(T_\mathbf{v})\big)\\
=\mathbb{E}\left[\left(\frac{r(S(\mathbf{v}))}{r(\mathbf{w})}|\mathbf{w}|C_{\mathrm{SEI}}(T_{\mathbf{w}})\right)^2\right]-\left(\mathbb{E}\left[\frac{r(S(\mathbf{v}))}{r(\mathbf{w})}|\mathbf{w}|C_{\mathrm{SEI}}(T_{\mathbf{w}})\right]\right)^2
\end{multline}

We will tackle each of these terms separately.  First,
\begin{multline}\label{variance1}
\mathbb{E}\left[\left(\frac{r(S(\mathbf{v}))}{r(\mathbf{w})}|\mathbf{w}|C_{\mathrm{SEI}}(T_{\mathbf{w}})\right)^2\right]\\
\begin{aligned}
&=\sum_{\mathbf{w}\in H(\mathbf{v})}P(\mathbf{w})\left(\frac{r(S(\mathbf{v}))}{r(\mathbf{w})}\right)^2\mathbb{E}\left[\left(|\mathbf{w}|C_{\mathrm{SEI}}(T_{\mathbf{w}})\right)^2\right]\\
&=\sum_{\mathbf{w}\in H(\mathbf{v})}\frac{1}{\binom{|S(\mathbf{v})|-1}{|\mathbf{w}|-1}}\frac{r(\mathbf{w})}{r(S(\mathbf{v}))}\left(\frac{r(S(\mathbf{v}))}{r(\mathbf{w})}\right)^2\mathbb{E}\left[\left(|\mathbf{w}|C_{\mathrm{SEI}}(T_{\mathbf{w}})\right)^2\right]\\
&=\sum_{\mathbf{w}\in H(\mathbf{v})}\frac{1}{\binom{|S(\mathbf{v})|-1}{|\mathbf{w}|-1}}\frac{r(S(\mathbf{v}))}{r(\mathbf{w})}\mathbb{E}\left[\left(|\mathbf{w}|C_{\mathrm{SEI}}(T_{\mathbf{w}})\right)^2\right]\\
&=\sum_{\mathbf{w}\in H(\mathbf{v})}\frac{1}{\binom{|S(\mathbf{v})|-1}{|\mathbf{w}|-1}}\frac{r(S(\mathbf{v}))}{r(\mathbf{w})}\Big(\mathrm{Var}\big(|\mathbf{w}|C_{\mathrm{SEI}}(T_\mathbf{w})\big)+\big(\mathbb{E}\left[|\mathbf{w}|C_{\mathrm{SEI}}(T_{\mathbf{w}})\right]\big)^2\Big)\\
&=\sum_{\mathbf{w}\in H(\mathbf{v})}\frac{1}{\binom{|S(\mathbf{v})|-1}{|\mathbf{w}|-1}}\frac{r(S(\mathbf{v}))}{r(\mathbf{w})}\Big(\mathrm{Var}\big(|\mathbf{w}|C_{\mathrm{SEI}}(T_\mathbf{w})\big)+\mathrm{Cost}(T_\mathbf{w})^2\Big)
\end{aligned}
\end{multline}

Next,
\begin{multline}\label{variance2}
\mathbb{E}\left[\frac{r(S(\mathbf{v}))}{r(\mathbf{w})}|\mathbf{w}|C_{\mathrm{SEI}}(T_{\mathbf{w}})\right]\\
\begin{aligned}
&=\sum_{\mathbf{w}\in H(\mathbf{v})}P(\mathbf{w})\frac{r(S(\mathbf{v}))}{r(\mathbf{w})}\mathbb{E}\left[|\mathbf{w}|C_{\mathrm{SEI}}(T_{\mathbf{w}})\right]\\
&=\sum_{\mathbf{w}\in H(\mathbf{v})}\frac{1}{\binom{|S(\mathbf{v})|-1}{|\mathbf{w}|-1}}\frac{r(\mathbf{w})}{r(S(\mathbf{v}))}\frac{r(S(\mathbf{v}))}{r(\mathbf{w})}\mathbb{E}\left[|\mathbf{w}|C_{\mathrm{SEI}}(T_{\mathbf{w}})\right]\\
&=\sum_{\mathbf{w}\in H(\mathbf{v})}\frac{1}{\binom{|S(\mathbf{v})|-1}{|\mathbf{w}|-1}}\mathbb{E}\left[|\mathbf{w}|C_{\mathrm{SEI}}(T_{\mathbf{w}})\right]\\
&=\sum_{\mathbf{w}\in H(\mathbf{v})}\frac{\mathrm{Cost}(T_\mathbf{w})}{\binom{|S(\mathbf{v})|-1}{|\mathbf{w}|-1}}\\
&=\mathrm{Cost}(T_{S(\mathbf{v})})
\end{aligned}
\end{multline}
where the last line is due to Lemma \ref{sumlemma}.

Substituting the results of Equations \ref{variance1} and \ref{variance2} into \ref{bigvariance} yields the theorem statement.
\qed\end{proof}

From Theorem \ref{alg2var} we can easily find an expression for the coefficient of variation (CV).

\begin{corollary}
\label{CVcor}
For a forest $T_\mathbf{v}$ rooted at a hypernode $\mathbf{v}$, the coefficient of variation produced by Algorithm 2 is given by
\begin{multline*}
\mathrm{CV}^2\big(|\mathbf{v}|C_{\mathrm{SEI}}(T_\mathbf{v})\big)\\
=\sum_{\mathbf{w}\in H(\mathbf{v})}\frac{1}{\binom{|S(\mathbf{v})|-1}{|\mathbf{w}|-1}}\frac{r\big(S(\mathbf{v})\big)}{r(\mathbf{w})}\left(\frac{\mathrm{Cost}(T_\mathbf{w})}{\mathrm{Cost}(T_\mathbf{v})}\right)^2\Big(\mathrm{CV}^2\big(|\mathbf{w}|C_{\mathrm{SEI}}(T_\mathbf{w})\big)+1\Big)\\
-\left(\frac{\mathrm{Cost}(T_{S(\mathbf{v})})}{\mathrm{Cost}(T_\mathbf{v})}\right)^2
\end{multline*}
\end{corollary}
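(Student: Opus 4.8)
The plan is to derive this directly from Theorem \ref{alg2var} together with the unbiasedness established in Theorem \ref{alg2unbiased}, since the coefficient of variation is by definition
\[
\mathrm{CV}^2(X)=\frac{\mathrm{Var}(X)}{\big(\mathbb{E}[X]\big)^2}.
\]
First I would apply Theorem \ref{alg2unbiased} at the root hypernode $\mathbf{v}$ to record that $\mathbb{E}\big[|\mathbf{v}|C_{\mathrm{SEI}}(T_\mathbf{v})\big]=\mathrm{Cost}(T_\mathbf{v})$, so that dividing the variance formula of Theorem \ref{alg2var} through by $\mathrm{Cost}(T_\mathbf{v})^2$ yields $\mathrm{CV}^2\big(|\mathbf{v}|C_{\mathrm{SEI}}(T_\mathbf{v})\big)$ on the left-hand side. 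The isolated term $-\mathrm{Cost}(T_{S(\mathbf{v})})^2$ immediately becomes $-\big(\mathrm{Cost}(T_{S(\mathbf{v})})/\mathrm{Cost}(T_\mathbf{v})\big)^2$.

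Next I would handle the summand. The key observation is that the same unbiasedness argument applied at each hyperchild $\mathbf{w}\in H(\mathbf{v})$ gives $\mathbb{E}\big[|\mathbf{w}|C_{\mathrm{SEI}}(T_\mathbf{w})\big]=\mathrm{Cost}(T_\mathbf{w})$, hence
\[
\mathrm{Var}\big(|\mathbf{w}|C_{\mathrm{SEI}}(T_\mathbf{w})\big)=\mathrm{Cost}(T_\mathbf{w})^2\,\mathrm{CV}^2\big(|\mathbf{w}|C_{\mathrm{SEI}}(T_\mathbf{w})\big),
\]
so that $\mathrm{Var}\big(|\mathbf{w}|C_{\mathrm{SEI}}(T_\mathbf{w})\big)+\mathrm{Cost}(T_\mathbf{w})^2=\mathrm{Cost}(T_\mathbf{w})^2\big(\mathrm{CV}^2\big(|\mathbf{w}|C_{\mathrm{SEI}}(T_\mathbf{w})\big)+1\big)$. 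Substituting this into the sum and pulling the remaining factor $\mathrm{Cost}(T_\mathbf{w})^2/\mathrm{Cost}(T_\mathbf{v})^2=\big(\mathrm{Cost}(T_\mathbf{w})/\mathrm{Cost}(T_\mathbf{v})\big)^2$ alongside $\mathrm{CV}^2\big(|\mathbf{w}|C_{\mathrm{SEI}}(T_\mathbf{w})\big)+1$ produces exactly the claimed expression.

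This argument is essentially pure algebraic bookkeeping, so I do not anticipate a genuine obstacle; the only point requiring care is to make sure the division by $\mathrm{Cost}(T_\mathbf{v})^2$ is legitimate, i.e.\ that $\mathrm{Cost}(T_\mathbf{v})>0$ (equivalently $\mathrm{Cost}(T_\mathbf{w})>0$ for the hyperchildren), which should be stated as a standing assumption for the coefficient of variation to be defined at all; one may exclude the degenerate all-zero-cost case. With that caveat noted, the corollary follows from Theorems \ref{alg2unbiased} and \ref{alg2var} by the substitutions described above.
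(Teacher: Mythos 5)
Your proposal is correct and follows exactly the paper's own argument: divide the identity of Theorem \ref{alg2var} by $\mathrm{Cost}(T_\mathbf{v})^2$, use unbiasedness from Theorem \ref{alg2unbiased} to identify the resulting left-hand side as $\mathrm{CV}^2$, and factor $\mathrm{Cost}(T_\mathbf{w})^2$ out of $\mathrm{Var}\big(|\mathbf{w}|C_{\mathrm{SEI}}(T_\mathbf{w})\big)+\mathrm{Cost}(T_\mathbf{w})^2$ inside the sum. Your added remark that $\mathrm{Cost}(T_\mathbf{v})>0$ must hold for the division (and for $\mathrm{CV}$ to be defined) is a sensible precaution that the paper leaves implicit.
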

\begin{proof}
This follows from Theorem \ref{alg2var} by dividing both sides of that equation by $\big(\mathrm{Cost}(T_\mathbf{v})\big)^2$ and then factoring out $\big(\mathrm{Cost}(T_\mathbf{w})\big)^2$ from $\mathrm{Var}\big(|\mathbf{w}|C_{\mathrm{SEI}}(T_\mathbf{w})\big)+\mathrm{Cost}(T_\mathbf{w})^2$.
\qed\end{proof}

\section{Upper Bounds on the Variance}

The first bound we present is almost as complicated as the formula for the coefficient of variation given in the previous section; however, this bound is very useful for proving other bounds because of its recursive structure, and so we present it first.

\begin{theorem}
\label{alg2cv}
For a forest $T_\mathbf{v}$ rooted at a hypernode $\mathbf{v}$, the coefficient of variation produced by Algorithm 2 is bounded by
\begin{multline*}
\mathrm{CV}^2\big(|\mathbf{v}|C_{\mathrm{SEI}}(T_\mathbf{v})\big)+1\\
\leq\sum_{\mathbf{w}\in H(\mathbf{v})}\frac{1}{\binom{|S(\mathbf{v})|-1}{|\mathbf{w}|-1}}\frac{r\big(S(\mathbf{v})\big)}{r(\mathbf{w})}\left(\frac{\mathrm{Cost}(T_\mathbf{w})}{\mathrm{Cost}(T_{S(\mathbf{v})})}\right)^2\Big(\mathrm{CV}^2\big(|\mathbf{w}|C_{\mathrm{SEI}}(T_\mathbf{w})\big)+1\Big)
\end{multline*}
\end{theorem}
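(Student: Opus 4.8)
The plan is to obtain the bound directly from the exact formula for $\mathrm{CV}^2$ in Corollary \ref{CVcor} by a short algebraic manipulation, with the only substantive input being the second–moment inequality $\mathbb{E}[X^2]\ge(\mathbb{E}[X])^2$. First I would abbreviate the quantities common to both sides: write $a=\mathrm{Cost}(T_\mathbf{v})$, $b=\mathrm{Cost}(T_{S(\mathbf{v})})$, and
\[
A=\sum_{\mathbf{w}\in H(\mathbf{v})}\frac{1}{\binom{|S(\mathbf{v})|-1}{|\mathbf{w}|-1}}\frac{r\big(S(\mathbf{v})\big)}{r(\mathbf{w})}\,\mathrm{Cost}(T_\mathbf{w})^2\Big(\mathrm{CV}^2\big(|\mathbf{w}|C_{\mathrm{SEI}}(T_\mathbf{w})\big)+1\Big).
\]
With this notation, Corollary \ref{CVcor} says $\mathrm{CV}^2\big(|\mathbf{v}|C_{\mathrm{SEI}}(T_\mathbf{v})\big)+1=(A-b^2+a^2)/a^2$, while the right-hand side of the theorem is exactly $A/b^2$. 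So the theorem is equivalent to the scalar inequality $(A-b^2+a^2)/a^2\le A/b^2$. (Throughout I assume $|S(\mathbf{v})|>0$, so $H(\mathbf{v})\ne\emptyset$; when $|S(\mathbf{v})|=0$ the estimator is exact and the recursion does not apply.)

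Next I would clear denominators. Since $\mathrm{Cost}(T_\mathbf{v})=c(\mathbf{v})+\mathrm{Cost}(T_{S(\mathbf{v})})$ we have $a\ge b>0$, hence $a^2-b^2\ge 0$ and $a^2,b^2>0$. Cross-multiplying, the target inequality is equivalent to $b^2(A-b^2+a^2)\le a^2A$, i.e. to $b^2(a^2-b^2)\le A(a^2-b^2)$. If $c(\mathbf{v})=0$ this reads $0\le 0$; otherwise $a^2-b^2>0$ and we may cancel it, reducing everything to the single inequality $b^2\le A$.

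Finally I would recognize $A$ and $b$ as a second moment and a first moment of the same random variable. Using $\mathrm{Var}\big(|\mathbf{w}|C_{\mathrm{SEI}}(T_\mathbf{w})\big)+\mathrm{Cost}(T_\mathbf{w})^2=\mathrm{Cost}(T_\mathbf{w})^2\big(\mathrm{CV}^2(|\mathbf{w}|C_{\mathrm{SEI}}(T_\mathbf{w}))+1\big)$, Equation \ref{variance1} shows $A=\mathbb{E}\big[(\frac{r(S(\mathbf{v}))}{r(\mathbf{w})}|\mathbf{w}|C_{\mathrm{SEI}}(T_\mathbf{w}))^2\big]$, and Equation \ref{variance2} shows $\mathbb{E}\big[\frac{r(S(\mathbf{v}))}{r(\mathbf{w})}|\mathbf{w}|C_{\mathrm{SEI}}(T_\mathbf{w})\big]=\mathrm{Cost}(T_{S(\mathbf{v})})=b$, where the expectation is over the random hyperchild $\mathbf{w}$ and the subsequent run of the algorithm. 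Then $A-b^2=\mathrm{Var}\big(\frac{r(S(\mathbf{v}))}{r(\mathbf{w})}|\mathbf{w}|C_{\mathrm{SEI}}(T_\mathbf{w})\big)\ge 0$, which is precisely $b^2\le A$, completing the argument.

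I do not expect a genuine obstacle here — the work is essentially bookkeeping. The one thing to get right is the observation that the gap between the exact CV formula of Corollary \ref{CVcor} and the stated bound is exactly the slack introduced by replacing $\mathrm{Cost}(T_\mathbf{v})$ by the smaller $\mathrm{Cost}(T_{S(\mathbf{v})})$ in the denominator, and that this slack is covered by the nonnegativity of a variance. I would also check the degenerate cases ($S(\mathbf{v})=\emptyset$, and $\mathrm{Cost}(T_{S(\mathbf{v})})=0$) so that every division step above is legitimate.
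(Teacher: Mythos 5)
Your proof is correct and follows essentially the same route as the paper: both derive the bound purely algebraically from the exact recursion in Corollary \ref{CVcor}, the paper by factoring out the ratio $\big(\mathrm{Cost}(T_{S(\mathbf{v})})/\mathrm{Cost}(T_\mathbf{v})\big)^2\le 1$, you by cross-multiplying and reducing to $b^2\le A$. If anything, your version is slightly more explicit: the paper's step ``since the first factor on the right hand side is less than 1'' tacitly requires the bracketed quantity $-1+A/b^2$ to be nonnegative, which is precisely the inequality $b^2\le A$ that you justify as the nonnegativity of the variance of $\tfrac{r(S(\mathbf{v}))}{r(\mathbf{w})}|\mathbf{w}|C_{\mathrm{SEI}}(T_\mathbf{w})$.
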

\begin{proof}
To save space we will refer to $\mathrm{CV}^2\big(|\mathbf{v}|C_{\mathrm{SEI}}(T_\mathbf{v})\big)+1$ as $F(\mathbf{v})$ and likewise for other hypernodes.  With this notation, Corollary \ref{CVcor} says that
\begin{multline*}
F(\mathbf{v})-1\\
=\left(\sum_{\mathbf{w}\in H(\mathbf{v})}\frac{1}{\binom{|S(\mathbf{v})|-1}{|\mathbf{w}|-1}}\frac{r\big(S(\mathbf{v})\big)}{r(\mathbf{w})}\left(\frac{\mathrm{Cost}(T_\mathbf{w})}{\mathrm{Cost}(T_\mathbf{v})}\right)^2F(\mathbf{w})\right)
-\left(\frac{\mathrm{Cost}(T_{S(\mathbf{v})})}{\mathrm{Cost}(T_\mathbf{v})}\right)^2
\end{multline*}

Now we factor the right hand side (moving the subtracted term before the sum to avoid confusion) and get
\begin{multline*}
F(\mathbf{v})-1=\\
\left(\frac{\mathrm{Cost}(T_{S(\mathbf{v})})}{\mathrm{Cost}(T_\mathbf{v})}\right)^2\left(-1+\sum_{\mathbf{w}\in H(\mathbf{v})}\frac{1}{\binom{|S(\mathbf{v})|-1}{|\mathbf{w}|-1}}\frac{r\big(S(\mathbf{v})\big)}{r(\mathbf{w})}\left(\frac{\mathrm{Cost}(T_\mathbf{w})}{\mathrm{Cost}(T_{S(\mathbf{v})})}\right)^2F(\mathbf{w})\right)
\end{multline*}

Since the first factor on the right hand side is less than 1, this implies
\[F(\mathbf{v})-1
\leq-1+\sum_{\mathbf{w}\in H(\mathbf{v})}\frac{1}{\binom{|S(\mathbf{v})|-1}{|\mathbf{w}|-1}}\frac{r\big(S(\mathbf{v})\big)}{r(\mathbf{w})}\left(\frac{\mathrm{Cost}(T_\mathbf{w})}{\mathrm{Cost}(T_{S(\mathbf{v})})}\right)^2F(\mathbf{w})\]

After canceling $-1$ from both sides, we have the result. 
\qed\end{proof}

Next, we will show we can bound the coefficient of variation by a measure of how close the importance function $r$ is to the Cost function.  First, we need a few new definitions.

For a tree $T_{\mathbf{v}}$ of height $n$, rooted at $\mathbf{v}$, define $\mathbf{x}_0:=\mathbf{v}$, and let $\mathbf{x}_0,\mathbf{x}_1,\dots,\mathbf{x}_n$  be any possible sequence of hypernodes that can be produced by Algorithm 2.  We define a function $\alpha$ on such sequences that describes how close our importance function is to the Cost function over the sequence.  Let $\alpha$ be given by
\[\alpha(\mathbf{x}_0,\mathbf{x}_1,\dots,\mathbf{x}_n)=\prod_{i=1}^n\frac{r(S(\mathbf{x}_{i-1}))}{r(\mathbf{x}_i)}\frac{\mathrm{Cost}(T_{\mathbf{x}_i})}{\mathrm{Cost}(T_{S(\mathbf{x}_{i-1})})}\]

Note that for a tree of height zero, the product is empty, and so $\alpha=1$.  We also have $\alpha=1$ in the case that we use the exact Cost function for the importance function.

Before stating our next bound, we first need a lemma regarding the expected value of $\alpha$.
\begin{lemma}
\label{alphalemma}
For a forest $T_{\mathbf{v}}$, let $\mathbf{v},\mathbf{x}_1,\dots,\mathbf{x}_n$ be any possible hypernode sequence produced by Algorithm 2.  Then
\[\mathbb{E}\big[\alpha(\mathbf{v},\mathbf{x}_1,\dots,\mathbf{x}_n)\big]=1\]
\end{lemma}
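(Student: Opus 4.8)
The plan is to prove the identity by induction on the height $n$ of the forest $T_{\mathbf{v}}$. The base case $n=0$ is immediate: the product defining $\alpha$ is empty, so $\alpha\equiv 1$ and hence $\mathbb{E}[\alpha]=1$. For the inductive step I would peel off the first factor of the product, writing
\[
\alpha(\mathbf{v},\mathbf{x}_1,\dots,\mathbf{x}_n)=\frac{r(S(\mathbf{v}))}{r(\mathbf{x}_1)}\,\frac{\mathrm{Cost}(T_{\mathbf{x}_1})}{\mathrm{Cost}(T_{S(\mathbf{v})})}\cdot\alpha(\mathbf{x}_1,\mathbf{x}_2,\dots,\mathbf{x}_n),
\]
so that conditioning on the first chosen hyperchild $\mathbf{x}_1=\mathbf{w}\in H(\mathbf{v})$ reduces matters to the tail $\mathbf{x}_2,\dots,\mathbf{x}_n$, which is exactly a run of Algorithm 2 on the forest $T_{\mathbf{w}}$ of height $n-1$.

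Applying the induction hypothesis to $T_{\mathbf{w}}$ gives $\mathbb{E}\big[\alpha(\mathbf{w},\mathbf{x}_2,\dots,\mathbf{x}_n)\mid\mathbf{x}_1=\mathbf{w}\big]=1$, so the tower property yields
\[
\mathbb{E}[\alpha]=\sum_{\mathbf{w}\in H(\mathbf{v})}P(\mathbf{w})\,\frac{r(S(\mathbf{v}))}{r(\mathbf{w})}\,\frac{\mathrm{Cost}(T_{\mathbf{w}})}{\mathrm{Cost}(T_{S(\mathbf{v})})}.
\]
Now I would substitute the selection probability $P(\mathbf{w})=\frac{r(\mathbf{w})}{r(S(\mathbf{v}))}\binom{|S(\mathbf{v})|-1}{|\mathbf{w}|-1}^{-1}$ established in the proof of Theorem \ref{alg2unbiased}; the factors $r(S(\mathbf{v}))/r(\mathbf{w})$ and $r(\mathbf{w})/r(S(\mathbf{v}))$ cancel, leaving
\[
\mathbb{E}[\alpha]=\frac{1}{\mathrm{Cost}(T_{S(\mathbf{v})})}\sum_{\mathbf{w}\in H(\mathbf{v})}\frac{\mathrm{Cost}(T_{\mathbf{w}})}{\binom{|S(\mathbf{v})|-1}{|\mathbf{w}|-1}}.
\]
By Lemma \ref{sumlemma} the remaining sum equals $\mathrm{Cost}(T_{S(\mathbf{v})})$, and therefore $\mathbb{E}[\alpha]=1$, completing the induction.

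The step I expect to need the most care is the conditioning/Markov argument in the first paragraph: one must justify that, given $\mathbf{x}_1=\mathbf{w}$, the remaining portion of a run of Algorithm 2 on $T_{\mathbf{v}}$ has the same law as a full run of Algorithm 2 on $T_{\mathbf{w}}$, so that the induction hypothesis legitimately applies to the conditional expectation of the tail of $\alpha$. Once that is in place, the remainder is the routine cancellation above together with a direct appeal to Lemma \ref{sumlemma}. (Implicit throughout is that all the quantities are well defined: $r$ is strictly positive by hypothesis, and $\mathrm{Cost}(T_{S(\mathbf{v})})>0$ whenever $S(\mathbf{v})\neq\emptyset$, which is the regime in which $\alpha$ is considered.)
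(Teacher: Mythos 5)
Your proof is correct and follows essentially the same route as the paper's: induction on the height, peeling off the first factor of $\alpha$, conditioning on $\mathbf{x}_1$ so the induction hypothesis applies to the tail, substituting the selection probability $P(\mathbf{x}_1)$ to cancel the importance-function ratios, and finishing with Lemma \ref{sumlemma}. Your explicit remark about justifying that the conditional law of the tail matches a fresh run of Algorithm 2 on $T_{\mathbf{w}}$ is a point the paper passes over implicitly, but it is the same argument.
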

\begin{proof}
The proof proceeds by induction on the height of the tree.  For a tree $T_\mathbf{v}$ of height 0, $\alpha(\mathbf{v})=1$ always.

For the inductive step, let $T_\mathbf{v}$ be a forest of height $n$ and assume the proposition for all forests of heights $n-1$ or less.  Define $\mathbf{x}_0:=\mathbf{v}$. Then
\[
\mathbb{E}\big[\alpha(\mathbf{x}_0,\dots,\mathbf{x}_n)\big]=\sum_{\substack{\mathbf{x}_1,\dots,\mathbf{x}_n : \\ \mathbf{x}_{i+1}\in H(\mathbf{x}_i)}}P(\mathbf{x}_1,\dots,\mathbf{x}_n)\alpha(\mathbf{x}_0,\dots,\mathbf{x}_n)\]

The right hand side sum can be expanded to
\begin{multline*}
\sum_{\mathbf{x}_1\in H(\mathbf{v})}P(\mathbf{x}_1)\frac{r(S(\mathbf{v}))}{r(\mathbf{x}_1)}\frac{\mathrm{Cost}(T_{\mathbf{x}_1})}{\mathrm{Cost}(T_{S(\mathbf{v})})}\sum_{\substack{\mathbf{x}_2,\dots,\mathbf{x}_n : \\ \mathbf{x}_{i+1}\in H(\mathbf{x}_i)}}P(\mathbf{x}_2,\dots,\mathbf{x}_n \mid \mathbf{x}_1)\alpha(\mathbf{x}_1,\dots,\mathbf{x}_n)\\
\begin{aligned}
&=\sum_{\mathbf{x}_1\in H(\mathbf{v})}P(\mathbf{x}_1)\frac{r(S(\mathbf{v}))}{r(\mathbf{x}_1)}\frac{\mathrm{Cost}(T_{\mathbf{x}_1})}{\mathrm{Cost}(T_{S(\mathbf{v})})}\mathbb{E}\big[\alpha(\mathbf{x}_1,\dots,\mathbf{x}_n)\big]\\
&=\sum_{\mathbf{x}_1\in H(\mathbf{v})}P(\mathbf{x}_1)\frac{r(S(\mathbf{v}))}{r(\mathbf{x}_1)}\frac{\mathrm{Cost}(T_{\mathbf{x}_1})}{\mathrm{Cost}(T_{S(\mathbf{v})})}
\end{aligned}
\end{multline*}

Recall that the probability of choosing $\mathbf{x}_1$ from $H(\mathbf{v})$ is
\[P(\mathbf{x}_1)=\frac{1}{\binom{|S(\mathbf{v})|-1}{|\mathbf{x}_1|-1}}\frac{r(\mathbf{x}_1)}{r(S(\mathbf{v}))}\]

Hence
\begin{equation*}
\begin{split}
\mathbb{E}\big[\alpha(\mathbf{x}_0,\dots,\mathbf{x}_n)\big]
&=\sum_{\mathbf{x}_1\in H(\mathbf{v})}P(\mathbf{x}_1)\frac{r(S(\mathbf{v}))}{r(\mathbf{x}_1)}\frac{\mathrm{Cost}(T_{\mathbf{x}_1})}{\mathrm{Cost}(T_{S(\mathbf{v})})}\\
&=\sum_{\mathbf{x}_1\in H(\mathbf{v})}\frac{1}{\binom{|S(\mathbf{v})|-1}{|\mathbf{x}_1|-1}}\frac{r(\mathbf{x}_1)}{r(S(\mathbf{v}))}\frac{r(S(\mathbf{v}))}{r(\mathbf{x}_1)}\frac{\mathrm{Cost}(T_{\mathbf{x}_1})}{\mathrm{Cost}(T_{S(\mathbf{v})})}\\
&=\sum_{\mathbf{x}_1\in H(\mathbf{v})}\frac{1}{\binom{|S(\mathbf{v})|-1}{|\mathbf{x}_1|-1}}\frac{\mathrm{Cost}(T_{\mathbf{x}_1})}{\mathrm{Cost}(T_{S(\mathbf{v})})}\\
&=\frac{1}{\mathrm{Cost}(T_{S(\mathbf{v})})}\sum_{\mathbf{x}_1\in H(\mathbf{v})}\frac{\mathrm{Cost}(T_{\mathbf{x}_1})}{\binom{|S(\mathbf{v})|-1}{|\mathbf{x}_1|-1}}\\
&=1\\
\end{split}
\end{equation*}
where the last line is due to Lemma \ref{sumlemma}.
\qed\end{proof}

Now we are ready to state our next bound.

\begin{theorem}
For a forest $T_{\mathbf{v}}$, let $\mathbf{v},\mathbf{x}_1,\dots,\mathbf{x}_n$ be any possible hypernode sequence produced by Algorithm 2.  Then
\[\mathrm{CV}^2\big(|\mathbf{v}|C_{\mathrm{SEI}}(T_\mathbf{v})\big)\leq\mathrm{Var}(\alpha(\mathbf{v},\mathbf{x}_1,\dots,\mathbf{x}_n))\]
\end{theorem}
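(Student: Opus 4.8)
The plan is to prove the equivalent statement $\mathrm{CV}^2\big(|\mathbf{v}|C_{\mathrm{SEI}}(T_\mathbf{v})\big)+1\leq\mathbb{E}\big[\alpha(\mathbf{v},\mathbf{x}_1,\dots,\mathbf{x}_n)^2\big]$. By Lemma~\ref{alphalemma} we have $\mathbb{E}[\alpha]=1$, hence $\mathrm{Var}(\alpha)=\mathbb{E}[\alpha^2]-1$, so subtracting $1$ from both sides of this statement recovers the theorem. Following the notation of the proof of Theorem~\ref{alg2cv}, write $F(\mathbf{v})=\mathrm{CV}^2\big(|\mathbf{v}|C_{\mathrm{SEI}}(T_\mathbf{v})\big)+1$, and additionally set $G(\mathbf{v})=\mathbb{E}\big[\alpha(\mathbf{v},\mathbf{x}_1,\dots,\mathbf{x}_n)^2\big]$. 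The goal is $F(\mathbf{v})\leq G(\mathbf{v})$, and the strategy is to show that $G$ satisfies \emph{with equality} the very recursion that bounds $F$ in Theorem~\ref{alg2cv}, then conclude by induction on height.

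First I would establish the recursion for $G$. Splitting off the $i=1$ factor of the product defining $\alpha$ gives $\alpha(\mathbf{v},\mathbf{x}_1,\dots,\mathbf{x}_n)=\frac{r(S(\mathbf{v}))}{r(\mathbf{x}_1)}\frac{\mathrm{Cost}(T_{\mathbf{x}_1})}{\mathrm{Cost}(T_{S(\mathbf{v})})}\,\alpha(\mathbf{x}_1,\dots,\mathbf{x}_n)$, where — exactly as used in the proof of Lemma~\ref{alphalemma} — conditionally on $\mathbf{x}_1$ the tail $\alpha(\mathbf{x}_1,\dots,\mathbf{x}_n)$ is distributed as $\alpha$ for Algorithm~2 run on $T_{\mathbf{x}_1}$, so $\mathbb{E}\big[\alpha(\mathbf{x}_1,\dots,\mathbf{x}_n)^2\mid\mathbf{x}_1=\mathbf{w}\big]=G(\mathbf{w})$. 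Squaring, summing over $\mathbf{w}\in H(\mathbf{v})$ against the selection probability $P(\mathbf{w})=\binom{|S(\mathbf{v})|-1}{|\mathbf{w}|-1}^{-1}\frac{r(\mathbf{w})}{r(S(\mathbf{v}))}$ from the proof of Theorem~\ref{alg2unbiased}, the weight $\frac{r(\mathbf{w})}{r(S(\mathbf{v}))}\big(\frac{r(S(\mathbf{v}))}{r(\mathbf{w})}\big)^2$ collapses to $\frac{r(S(\mathbf{v}))}{r(\mathbf{w})}$ (the same cancellation as in Lemma~\ref{alphalemma}, only with one extra power), giving
$$G(\mathbf{v})=\sum_{\mathbf{w}\in H(\mathbf{v})}\frac{1}{\binom{|S(\mathbf{v})|-1}{|\mathbf{w}|-1}}\frac{r\big(S(\mathbf{v})\big)}{r(\mathbf{w})}\left(\frac{\mathrm{Cost}(T_\mathbf{w})}{\mathrm{Cost}(T_{S(\mathbf{v})})}\right)^2 G(\mathbf{w}),$$
which is precisely the expression appearing on the right-hand side of Theorem~\ref{alg2cv} with $F$ replaced by $G$.

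Then I would induct on the height of $T_\mathbf{v}$. In the base case (height $0$) the algorithm returns $\mathrm{Cost}(T_\mathbf{v})/|\mathbf{v}|$ deterministically, so $F(\mathbf{v})=1$, while $\alpha(\mathbf{v})$ is an empty product equal to $1$, so $G(\mathbf{v})=1$; hence $F(\mathbf{v})\leq G(\mathbf{v})$. For the inductive step, every $\mathbf{w}\in H(\mathbf{v})$ roots a forest of strictly smaller height, so the inductive hypothesis gives $F(\mathbf{w})\leq G(\mathbf{w})$; since the coefficients $\frac{1}{\binom{|S(\mathbf{v})|-1}{|\mathbf{w}|-1}}\frac{r(S(\mathbf{v}))}{r(\mathbf{w})}\big(\frac{\mathrm{Cost}(T_\mathbf{w})}{\mathrm{Cost}(T_{S(\mathbf{v})})}\big)^2$ are all nonnegative ($r$ is positive, $\mathrm{Cost}$ nonnegative), chaining the bound of Theorem~\ref{alg2cv}, then the inductive hypothesis, then the $G$-recursion yields $F(\mathbf{v})\leq\sum_\mathbf{w}(\cdots)F(\mathbf{w})\leq\sum_\mathbf{w}(\cdots)G(\mathbf{w})=G(\mathbf{v})$. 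Subtracting $1$ and recalling $\mathrm{Var}(\alpha)=G(\mathbf{v})-1$ completes the proof.

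The only real obstacle is bookkeeping rather than mathematics: one must state carefully that the tail sequence $\mathbf{x}_1,\dots,\mathbf{x}_n$, conditioned on $\mathbf{x}_1$, has the distribution of an Algorithm~2 run on $T_{\mathbf{x}_1}$ (so that $G$ is genuinely defined recursively), and one must verify that the $G$-recursion matches the Theorem~\ref{alg2cv} recursion term by term. Both are immediate refinements of computations already performed in Lemma~\ref{alphalemma}; no new inequality or estimate is required beyond the one already supplied by Theorem~\ref{alg2cv}.
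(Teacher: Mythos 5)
Your proposal is correct and follows essentially the same route as the paper: both proofs induct on the height, invoke Theorem~\ref{alg2cv} for the recursive bound on $\mathrm{CV}^2+1$, use the cancellation $P(\mathbf{w})\bigl(r(S(\mathbf{v}))/r(\mathbf{w})\bigr)^2=\binom{|S(\mathbf{v})|-1}{|\mathbf{w}|-1}^{-1}r(S(\mathbf{v}))/r(\mathbf{w})$, and convert between $\mathrm{Var}(\alpha)$ and $\mathbb{E}[\alpha^2]$ via Lemma~\ref{alphalemma}. The only difference is presentational: you isolate the exact recursion for $G(\mathbf{v})=\mathbb{E}[\alpha^2]$ as a separate step before chaining inequalities, whereas the paper carries the conditional variance of the tail through a single chain; the content is identical.
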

\begin{proof}
The proof proceeds by induction on the height of the tree.  For a tree $T_\mathbf{v}$ of height 0, $\alpha(\mathbf{v})=1$, and the algorithm is exact, so both sides of the inequality are zero.

For the inductive step, let $T_\mathbf{v}$ be a forest of height $n$ and assume the proposition for all forests of heights $n-1$ or less.  Define $\mathbf{x}_0:=\mathbf{v}$. Then Theorem \ref{alg2cv} says
\begin{multline*}
\mathrm{CV}^2\big(|\mathbf{x}_0|C_{\mathrm{SEI}}(T_{\mathbf{x}_0})\big)+1\\
\leq\sum_{\mathbf{x}_1\in H(\mathbf{x}_0)}\frac{1}{\binom{|S(\mathbf{x}_0)|-1}{|\mathbf{x}_1|-1}}\frac{r\big(S(\mathbf{x}_0)\big)}{r(\mathbf{x}_1)}\left(\frac{\mathrm{Cost}(T_{\mathbf{x}_1})}{\mathrm{Cost}(T_{S(\mathbf{x}_0)})}\right)^2\Big(\mathrm{CV}^2\big(|\mathbf{x}_1|C_{\mathrm{SEI}}(T_{\mathbf{x}_1})\big)+1\Big)
\end{multline*}

Recall that the probability of choosing $\mathbf{x}_1$ from $H(\mathbf{x}_0)$ is
\[P(\mathbf{x}_1)=\frac{1}{\binom{|S(\mathbf{x}_0)|-1}{|\mathbf{x}_1|-1}}\frac{r(\mathbf{x}_1)}{r(S(\mathbf{x}_0))}\]

This implies
\begin{equation}
\frac{1}{\binom{|S(\mathbf{x}_0)|-1}{|\mathbf{x}_1|-1}}=P(\mathbf{x}_1)\frac{r(S(\mathbf{x}_0))}{r(\mathbf{x}_1)}
\end{equation}

Substituting the right hand side of (5) for the left hand side of (5) in the inequality yields
\begin{multline*}
\mathrm{CV}^2\big(|\mathbf{x}_0|C_{\mathrm{SEI}}(T_{\mathbf{x}_0})\big)+1\\
\leq\sum_{\mathbf{x}_1\in H(\mathbf{x}_0)}P(\mathbf{x}_1)\left(\frac{r\big(S(\mathbf{x}_0)\big)}{r(\mathbf{x}_1)}\frac{\mathrm{Cost}(T_{\mathbf{x}_1})}{\mathrm{Cost}(T_{S(\mathbf{x}_0)})}\right)^2\Big(\mathrm{CV}^2\big(|\mathbf{x}_1|C_{\mathrm{SEI}}(T_{\mathbf{x}_1})\big)+1\Big)
\end{multline*}

Then from the induction hypothesis, this becomes
\begin{multline*}
\mathrm{CV}^2\big(|\mathbf{x}_0|C_{\mathrm{SEI}}(T_{\mathbf{x}_0})\big)+1\\
\leq\sum_{\mathbf{x}_1\in H(\mathbf{x}_0)}P(\mathbf{x}_1)\left(\frac{r\big(S(\mathbf{x}_0)\big)}{r(\mathbf{x}_1)}\frac{\mathrm{Cost}(T_{\mathbf{x}_1})}{\mathrm{Cost}(T_{S(\mathbf{x}_0)})}\right)^2\Big(\mathrm{Var}\big(\alpha(\mathbf{x}_1,\dots,\mathbf{x}_n) \mid \mathbf{x}_1\big)+1\Big)
\end{multline*}

Then due to Lemma \ref{alphalemma} we can write this as
\begin{multline*}
\mathrm{CV}^2\big(|\mathbf{x}_0|C_{\mathrm{SEI}}(T_{\mathbf{x}_0})\big)+1\\
\begin{aligned}
&\leq\sum_{\mathbf{x}_1\in H(\mathbf{x}_0)}P(\mathbf{x}_1)\left(\frac{r\big(S(\mathbf{x}_0)\big)}{r(\mathbf{x}_1)}\frac{\mathrm{Cost}(T_{\mathbf{x}_1})}{\mathrm{Cost}(T_{S(\mathbf{x}_0)})}\right)^2\mathbb{E}\big[\alpha(\mathbf{x}_1,\dots,\mathbf{x}_n)^2 \mid \mathbf{x}_1\big]\\
&=\sum_{\mathbf{x}_1\in H(\mathbf{x}_0)}P(\mathbf{x}_1)\mathbb{E}\big[\alpha(\mathbf{x}_0,\mathbf{x}_1,\dots,\mathbf{x}_n)^2 \mid \mathbf{x}_1\big]\\
&=\mathbb{E}\big[\alpha(\mathbf{x}_0,\mathbf{x}_1,\dots,\mathbf{x}_n)^2\big]\\
&=\mathrm{Var}\big(\alpha(\mathbf{x}_0,\mathbf{x}_1,\dots,\mathbf{x}_n)\big)+\mathbb{E}\big[\alpha(\mathbf{x}_0,\mathbf{x}_1,\dots,\mathbf{x}_n)\big]^2\\
&=\mathrm{Var}\big(\alpha(\mathbf{x}_0,\mathbf{x}_1,\dots,\mathbf{x}_n)\big)+1
\end{aligned}
\end{multline*}
\qed\end{proof}

Recognizing that it may be very difficult to calculate the variance of $\alpha$, we now introduce a looser bound which may be easier to calculate.

\begin{theorem}
\label{maxalpha}
For a forest $T_{\mathbf{v}}$, let $\mathcal{X}(\mathbf{v})$ be the collection of all possible hypernode sequences $(\mathbf{x}_1,\dots,\mathbf{x}_n)$ which can occur after $\mathbf{x}_0=\mathbf{v}$ in Algorithm 2.  Then
\[\mathrm{CV}^2\big(|\mathbf{v}|C_{\mathrm{SEI}}(T_\mathbf{v})\big)\leq\max_{\substack{(\mathbf{x}_1,\dots,\mathbf{x}_n) \\ \in\mathcal{X}(\mathbf{v})}}\alpha(\mathbf{v},\mathbf{x}_1,\dots,\mathbf{x}_n)-1\]
\end{theorem}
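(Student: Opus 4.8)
The plan is to deduce this immediately from the previous theorem together with Lemma~\ref{alphalemma}. By the preceding theorem we already know that
\[\mathrm{CV}^2\big(|\mathbf{v}|C_{\mathrm{SEI}}(T_\mathbf{v})\big)\leq\mathrm{Var}\big(\alpha(\mathbf{v},\mathbf{x}_1,\dots,\mathbf{x}_n)\big),\]
and by Lemma~\ref{alphalemma} we have $\mathbb{E}[\alpha(\mathbf{v},\mathbf{x}_1,\dots,\mathbf{x}_n)]=1$, so $\mathrm{Var}(\alpha)=\mathbb{E}[\alpha^2]-1$. Thus it suffices to show $\mathbb{E}[\alpha^2]\leq\max_{\mathcal{X}(\mathbf{v})}\alpha$.

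The key observation is that $\alpha$ is a product of ratios of the strictly positive quantities $r(S(\mathbf{x}_{i-1}))$, $r(\mathbf{x}_i)$, $\mathrm{Cost}(T_{\mathbf{x}_i})$, $\mathrm{Cost}(T_{S(\mathbf{x}_{i-1})})$ (positivity of the first two is built into the definition of the importance function; positivity of the cost terms holds on the sequences actually reachable by Algorithm~2, since a reachable hypernode has nonempty successor set at each non-terminal step and the cost function is nonnegative — and if some reachable cost were zero the bound is vacuous or handled directly). Hence $\alpha\geq 0$ on $\mathcal{X}(\mathbf{v})$, and since $\mathcal{X}(\mathbf{v})$ is finite, $M:=\max_{\mathcal{X}(\mathbf{v})}\alpha$ is attained and finite. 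Then
\[\mathbb{E}[\alpha^2]=\mathbb{E}[\alpha\cdot\alpha]\leq M\,\mathbb{E}[\alpha]=M\cdot 1=M,\]
using $\alpha\leq M$ pointwise and $\alpha\geq 0$ to justify multiplying through by $\alpha$ inside the expectation.

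Combining the three displays gives
\[\mathrm{CV}^2\big(|\mathbf{v}|C_{\mathrm{SEI}}(T_\mathbf{v})\big)\leq\mathrm{Var}(\alpha)=\mathbb{E}[\alpha^2]-1\leq M-1=\max_{\substack{(\mathbf{x}_1,\dots,\mathbf{x}_n)\\\in\mathcal{X}(\mathbf{v})}}\alpha(\mathbf{v},\mathbf{x}_1,\dots,\mathbf{x}_n)-1,\]
as claimed. There is no real obstacle here; the only point requiring a word of care is the nonnegativity of $\alpha$ on reachable sequences, which is what licenses the bound $\mathbb{E}[\alpha^2]\leq M\,\mathbb{E}[\alpha]$ — an alternative, if one prefers to avoid any discussion of degenerate zero-cost cases, is to carry out the same induction on tree height used in the previous proof, replacing the appeal to the induction hypothesis by the crude bound $\mathbb{E}[\alpha(\mathbf{x}_1,\dots,\mathbf{x}_n)^2\mid\mathbf{x}_1]\leq\max\alpha(\mathbf{x}_1,\dots,\mathbf{x}_n)$ and then using Lemma~\ref{alphalemma} to collapse the weighted sum, but the direct argument above is shorter.
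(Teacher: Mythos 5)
Your proof is correct, but it takes a genuinely different route from the paper's. The paper proves Theorem~\ref{maxalpha} by a fresh induction on the height of the tree: at each level it applies the recursive bound of Theorem~\ref{alg2cv}, invokes the inductive hypothesis on the subforests, pulls the maximum of $\alpha$ out of the weighted sum, and collapses the remaining sum to $1$ via Lemma~\ref{sumlemma}. You instead treat the statement as an immediate corollary of the preceding theorem ($\mathrm{CV}^2 \leq \mathrm{Var}(\alpha)$) together with Lemma~\ref{alphalemma}: since $\mathbb{E}[\alpha]=1$ and $0 \leq \alpha \leq M$ pointwise, you get $\mathbb{E}[\alpha^2] \leq M\,\mathbb{E}[\alpha] = M$, hence $\mathrm{Var}(\alpha) \leq M-1$. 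Your argument is shorter and makes the logical relationship between the two theorems transparent (the max bound is strictly weaker than the variance bound, and follows from it by one line); the paper's inductive proof is self-contained relative to Theorem~\ref{alg2cv}, does not route through the variance-of-$\alpha$ theorem, and matches the proof template used throughout the section. Your caveat about reachable sequences with zero cost (where $\alpha$ would be undefined) is a real but minor issue that the paper also leaves implicit, so it does not count against you; and the alternative induction you sketch at the end is essentially the paper's actual proof.
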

\begin{proof}
The proof proceeds by induction on the height of the tree.  For a tree $T_\mathbf{v}$ of height 0, we have $\alpha(\mathbf{v})=1$, so the right hand side is zero, and we know that the algorithm is exact on trees of height zero, so the left hand side is also zero.

For the inductive step, let $T_\mathbf{v}$ be a forest of height $n$ and assume the proposition for all forests of heights $n-1$ or less.  Define $\mathbf{x}_0:=\mathbf{v}$. Then Theorem \ref{alg2cv} and the inductive hypothesis give us
\begin{multline*}
\mathrm{CV}^2\big(|\mathbf{x}_0|C_{\mathrm{SEI}}(T_{\mathbf{x}_0})\big)+1\\
\begin{aligned}
&\leq\sum_{\mathbf{x}_1\in H(\mathbf{x}_0)}\frac{1}{\binom{|S(\mathbf{x}_0)|-1}{|\mathbf{x}_1|-1}}\frac{r\big(S(\mathbf{x}_0)\big)}{r(\mathbf{x}_1)}\left(\frac{\mathrm{Cost}(T_{\mathbf{x}_1})}{\mathrm{Cost}(T_{S(\mathbf{x}_0)})}\right)^2\Big(\mathrm{CV}^2\big(|\mathbf{x}_1|C_{\mathrm{SEI}}(T_{\mathbf{x}_1})\big)+1\Big)\\
&\leq\sum_{\mathbf{x}_1\in H(\mathbf{x}_0)}\frac{1}{\binom{|S(\mathbf{x}_0)|-1}{|\mathbf{x}_1|-1}}\frac{r\big(S(\mathbf{x}_0)\big)}{r(\mathbf{x}_1)}\left(\frac{\mathrm{Cost}(T_{\mathbf{x}_1})}{\mathrm{Cost}(T_{S(\mathbf{x}_0)})}\right)^2\max_{\substack{(\mathbf{x}_2,\dots,\mathbf{x}_n) \\ \in\mathcal{X}(\mathbf{x}_1)}}\alpha(\mathbf{x}_1,\mathbf{x}_2,\dots,\mathbf{x}_n)\\
&=\sum_{\mathbf{x}_1\in H(\mathbf{x}_0)}\frac{1}{\binom{|S(\mathbf{x}_0)|-1}{|\mathbf{x}_1|-1}}\frac{\mathrm{Cost}(T_{\mathbf{x}_1})}{\mathrm{Cost}(T_{S(\mathbf{x}_0)})}\max_{\substack{(\mathbf{x}_2,\dots,\mathbf{x}_n) \\ \in\mathcal{X}(\mathbf{x}_1)}}\alpha(\mathbf{x}_0,\mathbf{x}_1,\dots,\mathbf{x}_n)\\
&\leq\sum_{\mathbf{x}_1\in H(\mathbf{x}_0)}\frac{1}{\binom{|S(\mathbf{x}_0)|-1}{|\mathbf{x}_1|-1}}\frac{\mathrm{Cost}(T_{\mathbf{x}_1})}{\mathrm{Cost}(T_{S(\mathbf{x}_0)})}\max_{\mathbf{x}_1\in H(\mathbf{x}_0)}\max_{\substack{(\mathbf{x}_2,\dots,\mathbf{x}_n) \\ \in\mathcal{X}(\mathbf{x}_1)}}\alpha(\mathbf{x}_0,\mathbf{x}_1,\dots,\mathbf{x}_n)\\
&=\sum_{\mathbf{x}_1\in H(\mathbf{x}_0)}\frac{1}{\binom{|S(\mathbf{x}_0)|-1}{|\mathbf{x}_1|-1}}\frac{\mathrm{Cost}(T_{\mathbf{x}_1})}{\mathrm{Cost}(T_{S(\mathbf{x}_0)})}\max_{\substack{(\mathbf{x}_1,\dots,\mathbf{x}_n) \\ \in\mathcal{X}(\mathbf{x}_0)}}\alpha(\mathbf{x}_0,\mathbf{x}_1,\dots,\mathbf{x}_n)\\
&=\max_{\substack{(\mathbf{x}_1,\dots,\mathbf{x}_n) \\ \in\mathcal{X}(\mathbf{x}_0)}}\alpha(\mathbf{x}_0,\mathbf{x}_1,\dots,\mathbf{x}_n)\sum_{\mathbf{x}_1\in H(\mathbf{x}_0)}\frac{1}{\binom{|S(\mathbf{x}_0)|-1}{|\mathbf{x}_1|-1}}\frac{\mathrm{Cost}(T_{\mathbf{x}_1})}{\mathrm{Cost}(T_{S(\mathbf{x}_0)})}\\
\end{aligned}
\end{multline*}

Due to Lemma \ref{sumlemma}, the summation simplifies to 1, which yields the proposition.
\qed\end{proof}

In case the last bound was still too difficult to calculate because it requires taking a maximum over a very large set, we now relax the bound to a product of maximums over smaller sets.

\begin{corollary}
For a forest $T_{\mathbf{v}}$, let $\mathcal{L}_i(\mathbf{v})$ be the collection of all possible hypernodes $\mathbf{x}_i$ which can be chosen by Algorithm 2 at level $i$ of the tree, where level 0 is the root $\mathbf{v}$.  Also let $\mathbf{x}_0=\mathbf{v}$.  Then
\[\mathrm{CV}^2\big(|\mathbf{v}|C_{\mathrm{SEI}}(T_\mathbf{v})\big)\leq\prod_{i=0}^{n-1}\max_{\substack{\mathbf{x}_i\in\mathcal{L}_i \\ \mathbf{x}_{i+1}\in H(\mathbf{x}_i)}}\alpha(\mathbf{x}_{i},\mathbf{x}_{i+1})-1\]
\end{corollary}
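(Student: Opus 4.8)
The plan is to obtain this directly from Theorem~\ref{maxalpha}, using only the multiplicative structure of $\alpha$ and the elementary fact that the maximum of a product of nonnegative terms is dominated by the product of the termwise maxima. First I would record that, writing $\mathbf{x}_0:=\mathbf{v}$, the definition of $\alpha$ on a sequence splits as a product of two‑argument factors,
\[\alpha(\mathbf{x}_0,\mathbf{x}_1,\dots,\mathbf{x}_n)=\prod_{i=0}^{n-1}\frac{r(S(\mathbf{x}_{i}))}{r(\mathbf{x}_{i+1})}\frac{\mathrm{Cost}(T_{\mathbf{x}_{i+1}})}{\mathrm{Cost}(T_{S(\mathbf{x}_{i})})}=\prod_{i=0}^{n-1}\alpha(\mathbf{x}_{i},\mathbf{x}_{i+1}),\]
where $\alpha(\mathbf{x}_i,\mathbf{x}_{i+1})$ denotes the $i$‑th factor, exactly the notation used in the statement. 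Each such factor is nonnegative (the importance ratios are strictly positive since $r$ is positive‑valued, and the cost ratios are nonnegative, with the usual convention that the estimate is only considered when the relevant costs do not vanish).

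The core step is then the following domination. Fix any admissible sequence $(\mathbf{x}_1,\dots,\mathbf{x}_n)\in\mathcal{X}(\mathbf{v})$. By the definitions of $\mathcal{L}_i(\mathbf{v})$ and of $\mathcal{X}(\mathbf{v})$, each coordinate satisfies $\mathbf{x}_i\in\mathcal{L}_i$ and $\mathbf{x}_{i+1}\in H(\mathbf{x}_i)$, so every factor $\alpha(\mathbf{x}_i,\mathbf{x}_{i+1})$ is bounded by the corresponding maximum in the statement, and since all factors are nonnegative,
\[\prod_{i=0}^{n-1}\alpha(\mathbf{x}_{i},\mathbf{x}_{i+1})\leq\prod_{i=0}^{n-1}\max_{\substack{\mathbf{x}_i\in\mathcal{L}_i \\ \mathbf{x}_{i+1}\in H(\mathbf{x}_i)}}\alpha(\mathbf{x}_{i},\mathbf{x}_{i+1}).\]
Taking the maximum over all $(\mathbf{x}_1,\dots,\mathbf{x}_n)\in\mathcal{X}(\mathbf{v})$ on the left‑hand side (the right‑hand side does not depend on the sequence) and invoking Theorem~\ref{maxalpha} yields
\[\mathrm{CV}^2\big(|\mathbf{v}|C_{\mathrm{SEI}}(T_\mathbf{v})\big)\leq\max_{(\mathbf{x}_1,\dots,\mathbf{x}_n)\in\mathcal{X}(\mathbf{v})}\alpha(\mathbf{v},\mathbf{x}_1,\dots,\mathbf{x}_n)-1\leq\prod_{i=0}^{n-1}\max_{\substack{\mathbf{x}_i\in\mathcal{L}_i \\ \mathbf{x}_{i+1}\in H(\mathbf{x}_i)}}\alpha(\mathbf{x}_{i},\mathbf{x}_{i+1})-1,\]
which is the desired inequality. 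The height‑$0$ case is covered automatically since then the product is empty (equal to $1$) and both sides are $0$.

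I do not anticipate any real obstacle; the corollary is essentially a repackaging of Theorem~\ref{maxalpha} in which the coupled maximum over whole admissible sequences is relaxed to an uncoupled, levelwise one. The only points requiring a word of care are the index bookkeeping (verifying that $\alpha(\mathbf{x}_0,\dots,\mathbf{x}_n)$ has precisely the $n$ pairwise factors $\alpha(\mathbf{x}_i,\mathbf{x}_{i+1})$, $i=0,\dots,n-1$, matching the range of the product in the statement), the nonnegativity of each factor needed to justify termwise domination of the product, and the observation — immediate from the definitions — that the $i$‑th coordinate of any sequence in $\mathcal{X}(\mathbf{v})$ lies in $\mathcal{L}_i(\mathbf{v})$ with its successor constrained to $H(\mathbf{x}_i)$.
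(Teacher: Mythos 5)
Your proposal is correct and follows essentially the same route as the paper: factor $\alpha(\mathbf{x}_0,\dots,\mathbf{x}_n)$ into the pairwise terms $\alpha(\mathbf{x}_i,\mathbf{x}_{i+1})$, dominate the maximum over coupled sequences by the product of uncoupled levelwise maxima, and invoke Theorem~\ref{maxalpha}. The paper states this in one line; your version merely spells out the nonnegativity and index bookkeeping that justify it.
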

\begin{proof}
Since
\[\alpha(\mathbf{x}_0,\mathbf{x}_1,\dots,\mathbf{x}_n)=\prod_{i=0}^{n-1}\alpha(\mathbf{x}_{i},\mathbf{x}_{i+1})\]
this follows directly from Theorem \ref{maxalpha} by expanding the set over which the maximum is taken.
\qed\end{proof}

\section{Application: Counting Linear Extensions}

\subsection{Background}
For a partially ordered set (poset), any total order which respects the partial order is known as a linear extension.  If we view the poset as a directed acyclic graph (DAG), then a linear extension is also known as a topological sort.  Determining the number of linear extensions of a given poset is a fundamental problem in the study of ordering and has many applications.

The standard method of obtaining a linear extension, which was first described by \citet{kahn}, corresponds to a breadth-first search of the corresponding DAG.  The procedure is as follows.  We choose some maximal element of the poset, and then delete that element from the poset, thus newly rendering some elements maximal.  We repeat until there are no elements left in the poset.  The order in which the elements were deleted is then a linear extension of the poset.

In fact, all linear extensions can be obtained in this manner, so the number of linear extensions is equal to the number of ways to execute the procedure.  We can think of the choices available to us at each step as forming a decision tree, where each branching corresponds to a choice of a maximal element, and each path from root to leaf corresponds to one linear extension.  If we let the cost function on the decision tree be one on the leaves and zero everywhere else, then the cost of the entire tree is the number of linear extensions.

\begin{example}
\label{LXexample}
Consider the poset whose Hasse diagram is shown in the left panel of Figure \ref{LXexamplefigure}.  The corresponding decision tree is shown in the right panel.  Each node in the decision tree is labeled according to which maximal node in the poset was deleted in order to continue the linear extension.  This poset has seven linear extensions, as shown by the decision tree.
\begin{figure}[h]
\begin{minipage}{0.5\textwidth}
\centering
\begin{tikzpicture}
\tikz \graph [no placement, nodes = {circle,draw}]{
a[x=0,y=2] -- c[x=1,y=1];
b[x=2,y=2] -- c;
b -- d[x=3,y=1];
c -- e[x=1,y=0];
};
\end{tikzpicture}
\end{minipage}%
\begin{minipage}{0.5\textwidth}
\centering
\begin{tikzpicture}[
level distance = 10mm,
level 1/.style={sibling distance=35mm},
level 2/.style={sibling distance=15mm},
level 3/.style={sibling distance=12mm},
level 4/.style={sibling distance=8mm}]
\coordinate
child {
	node {a}
	child {
		node {b}
		child {
			node {c}
			child {node {d} child {node (L1) {e}}}
			child {node {e} child {node {d}}}
			}
		child {node (L2) {d} child {node {c} child {node {e}}}}
		}
	}
child {
	node {b}
	child {
		node {a}
		child {
			node {c}
			child {node {d} child {node (R1) {e}}}
			child {node{e} child {node {d}}}
			}
		child {node (R2) {d} child {node {c} child {node {e}}}}
		}
	child {
		node {d}
		child {node {a} child {node {c} child {node {e}}}}
		}
	};
\node[draw=blue, fit=(L1)(L2), inner sep=2mm,rectangle,rounded corners] {};
\node[draw=blue, fit=(R1)(R2), inner sep=2mm,rectangle,rounded corners] {};
\end{tikzpicture}
\end{minipage}
\caption{The Hasse diagram of a poset (left) and its corresponding decision tree (right)}\label{LXexamplefigure}
\end{figure}

Notice that the two boxed regions of the decision tree are identical.  This is because they both occur after the deletion of $a$ and $b$, in either order.  There are other identical regions, as well, such as those that occur after deletion of $a,b,d$ in any of the three possible orders.\qed
\end{example}

Aside from posets which are already linearly ordered, all linear extension decision trees contain identical regions such as those highlighted in the example.  For example, in any poset with at least two maximal elements, $a$ and $b$, the region of the decision tree corresponding to extensions beginning with $ab$ will be identical to the region of the decision tree corresponding to extensions beginning with $ba$.

Hence, even if a poset is generated randomly, the linear extension decision tree arising from the poset will not be random because many regions within it are not independent of one another.  If the decision tree were completely random, there would be no reason to place more importance on any one particular branch than another.  Since the tree is not completely random, however, we have reason to suppose that we may benefit by introducing an importance function.

\subsection{Importance Functions}

Three different importance functions were tested, all of which were compared to the uniform importance function, which is 1 on all nodes.  In order to describe the three importance functions, we must first define some notation.

Let $n$ denote the number of elements in the poset.  For a node $x$ in a decision tree, let $\mathrm{sib}(x)$ denote the number of siblings of $x$ in the tree; that is, the number of nodes which are children of the parent of $x$.  Let $\mathrm{height}(x)$ denote the height of $x$ in the tree, so that leaves have height 0 and the root has height $n$.

Each node in the decision tree corresponds to some element in the poset, as shown, for example, in Figure \ref{LXexamplefigure}.  For a node $x$ in a decision tree, let $\mathrm{desc}(x)$ be the number of elements in the poset which are descendants of the poset element that $x$ corresponds to.  Note that $\mathrm{desc}(x)$ also counts whichever poset element $x$ corresponds to, so that we always have $\mathrm{desc}(x)\geq1$.

Then the three importance functions are as follows.

Importance function 1:
\[r(x)=\mathrm{sib}(x)^3\]

Importance function 2:
\[r(x)=\mathrm{sib}(x)^3\mathrm{desc}(x)\]

Importance function 3:
\[r(x)=\mathrm{sib}(x)^3\left(\frac{\mathrm{height}(x)+\mathrm{desc}(x)}{\mathrm{height}(x)-\mathrm{desc}(x)}\right)\]

It is worthwhile to note that when the budget $B=1$, each hypernode $\mathbf{x}_i$ chosen by Algorithm 2 is a single node, and so each $S(\mathbf{x}_i)$ is a single sibling set.  Hence the value of $\mathrm{sib}(x)$ is uniform for all $x\in S(\mathbf{x}_i)$ for all $\mathbf{x}_i$.  Thus $\mathrm{sib}(x)$ will cancel out of all instances of the quotient
\[\frac{r(S(\mathbf{x}_{i}))}{r(\mathbf{x}_{i+1})}\]
and have no effect on the importance function.  Therefore, when the budget $B=1$, importance function 1 is the same as the uniform importance function, as can be seen in Figure \ref{graph1}, and importance functions 2 and 3 are the same as those described in \citet*{ourLXpaper}, where their properties are explored in more detail.

\subsection{Methodology and Results}

All numerical tests of Algorithm 2 were implemented in C++ using a sparse representation of the posets.  All posets were randomly generated in the following manner.  Given the poset elements $v_1,v_2,\dots,v_n$, for each pair of elements $v_i$ and $v_j$ with $i<j$, the relation $v_i>v_j$ was given a 20\% probability to exist using a pseudo-random number generator.  The posets were then transitively completed.

The first set of tests compared the relative variance of the four importance functions as a function of the size of the posets.  These tests were repeated for the fixed budget values $B=1, 5, 10, 15, 20$.  At each budget size, the size of the poset, $n$, ran through the values $10,15,20,\dots,85$.

For each value of $B$ and $n$, $n^2$ posets were generated, and $n^2$ estimates were performed on each poset to calculate the relative variance for that poset.  The relative variance of for each poset was then averaged for each value of $n$.

The results for the different importance functions were of differing orders of magnitude; therefore, they are compared on a log-log scale.  The results are shown in Figures \ref{graph1}-\ref{graph5}.

The second set of tests compared the relative variance of the four importance functions as a function of the budget $B$.  These tests were repeated for the fixed poset sizes $n=10, 20, 40$.  At each poset size, the budget, $B$, ran through the values $1,2,3,\dots,100$.

Just as in the first set of tests, for each value of $B$ and $n$, $n^2$ posets were generated, and $n^2$ estimates were performed on each poset to calculate the relative variance for that poset.  The relative variance of each poset was then averaged for each value of $B$.

The results drop sharply as the budget grows; therefore, they are compared on a semi-log scale.  The results are shown in Figures \ref{graph6}-\ref{graph8}.

As was noted by \citet*{vaisman}, it is possible for an importance function to make the variance worse, as we see in Figure \ref{graph6} in posets with 10 nodes using importance function 1.  However, at all of the poset sizes tested, importance functions 2 and 3 present significant improvements to the variance, and this serves as evidence that it is worthwhile to incorporate an importance function into Stochastic Enumeration.

The importance function which performed the best was importance function 3, with a few exceptions.  For posets with less than 20 elements and a small budget of 1 or 2, importance function 2 performed slightly better.  In general, the improvements in variance due to all importance functions decreased slightly as the budget increased.

\begin{figure}
\centering
\includegraphics[width=\textwidth]{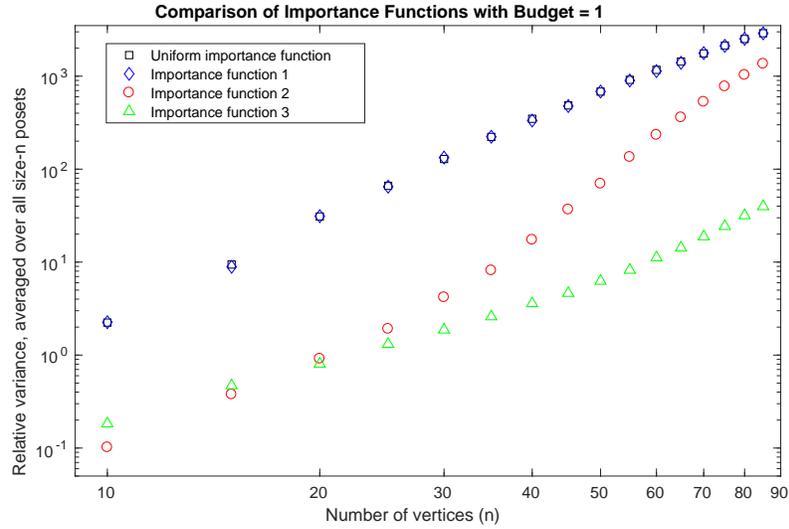}
\caption{Comparison of the relative variance of the importance functions as a function of poset size for fixed budget $B=1$}\label{graph1}
\end{figure}

\begin{figure}
\centering
\includegraphics[width=\textwidth]{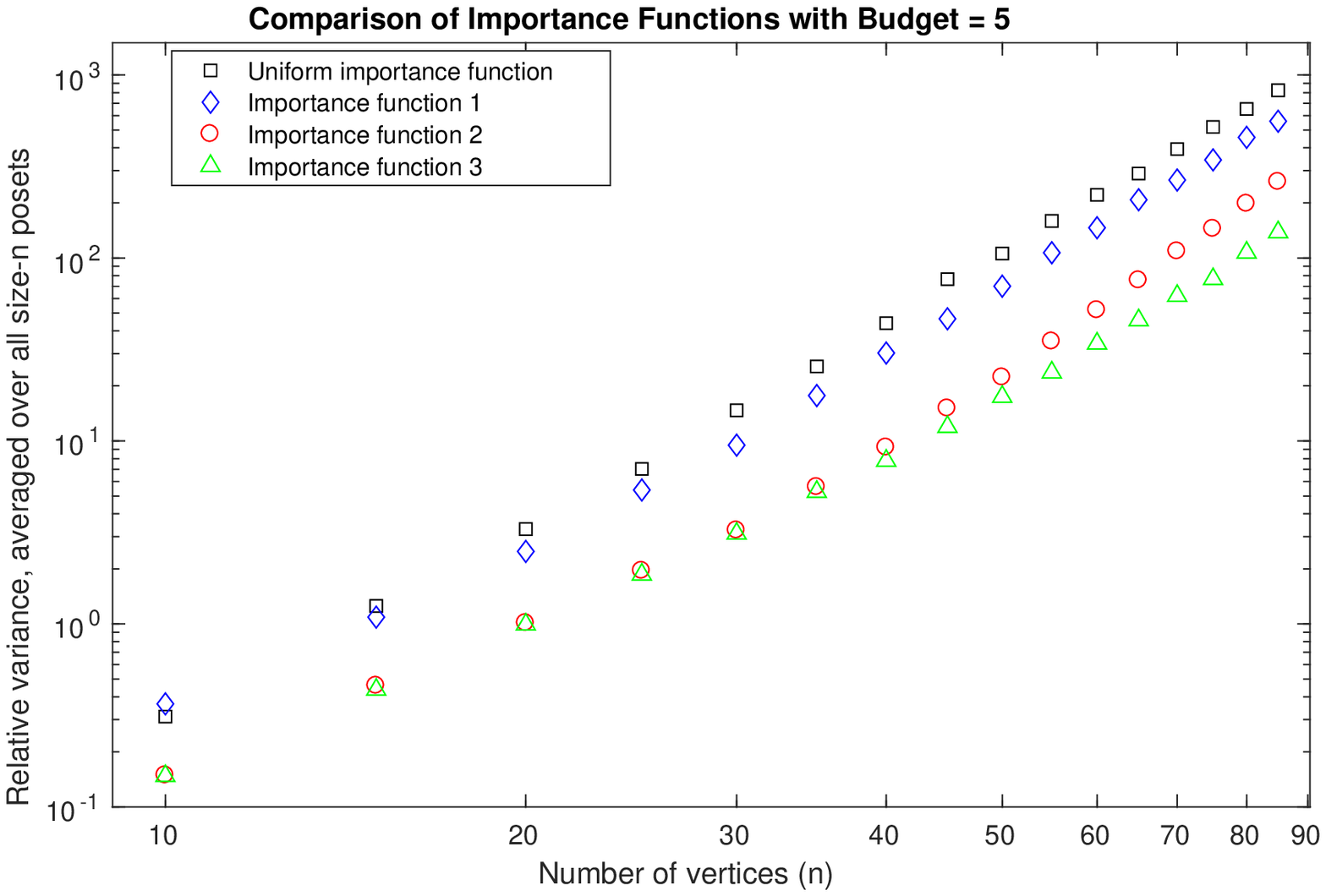}
\caption{Comparison of the relative variance of the importance functions as a function of poset size for fixed budget $B=5$}\label{graph2}
\end{figure}

\begin{figure}
\centering
\includegraphics[width=\textwidth]{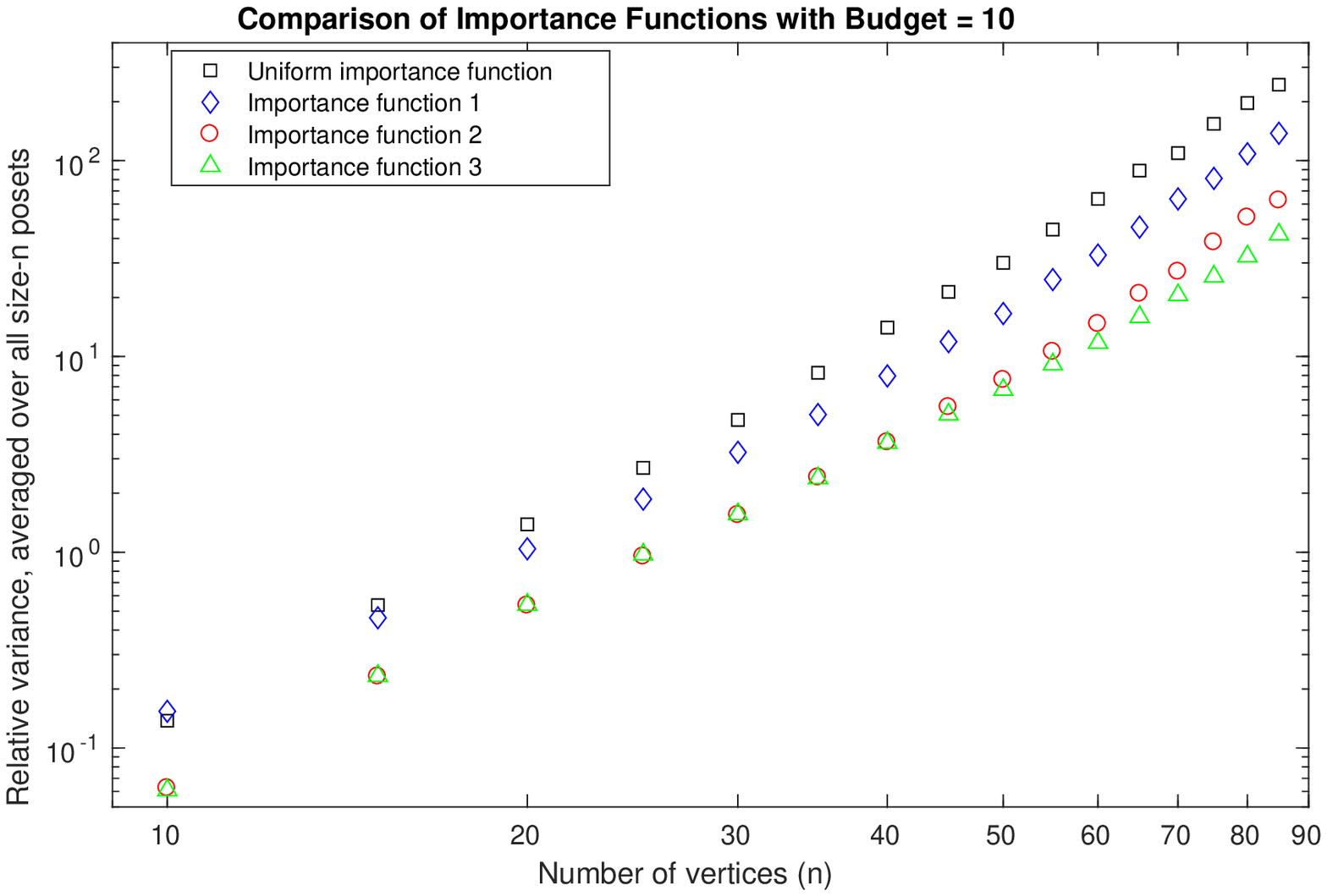}
\caption{Comparison of the relative variance of the importance functions as a function of poset size for fixed budget $B=10$}\label{graph3}
\end{figure}

\begin{figure}
\centering
\includegraphics[width=\textwidth]{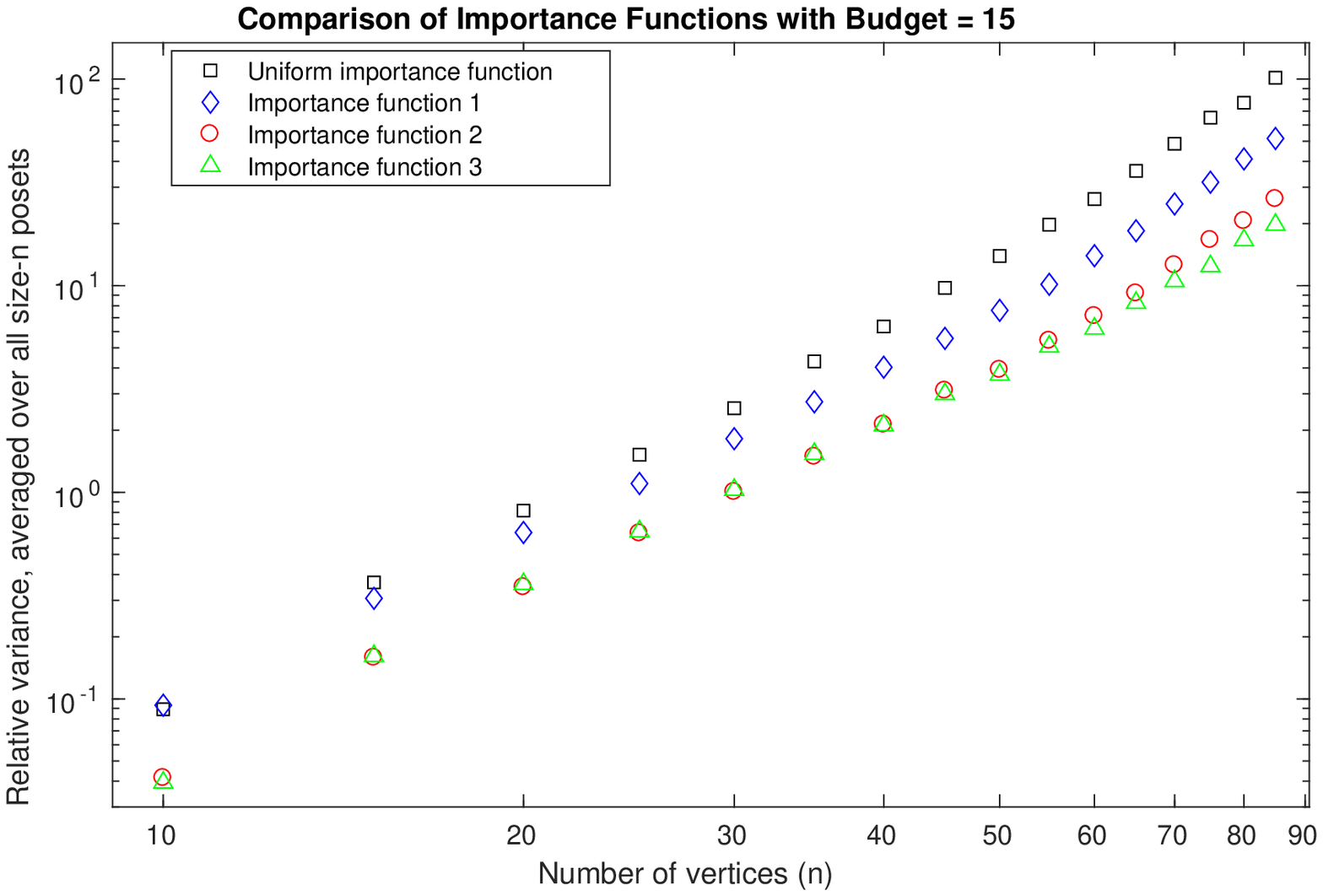}
\caption{Comparison of the relative variance of the importance functions as a function of poset size for fixed budget $B=15$}\label{graph4}
\end{figure}

\begin{figure}
\centering
\includegraphics[width=\textwidth]{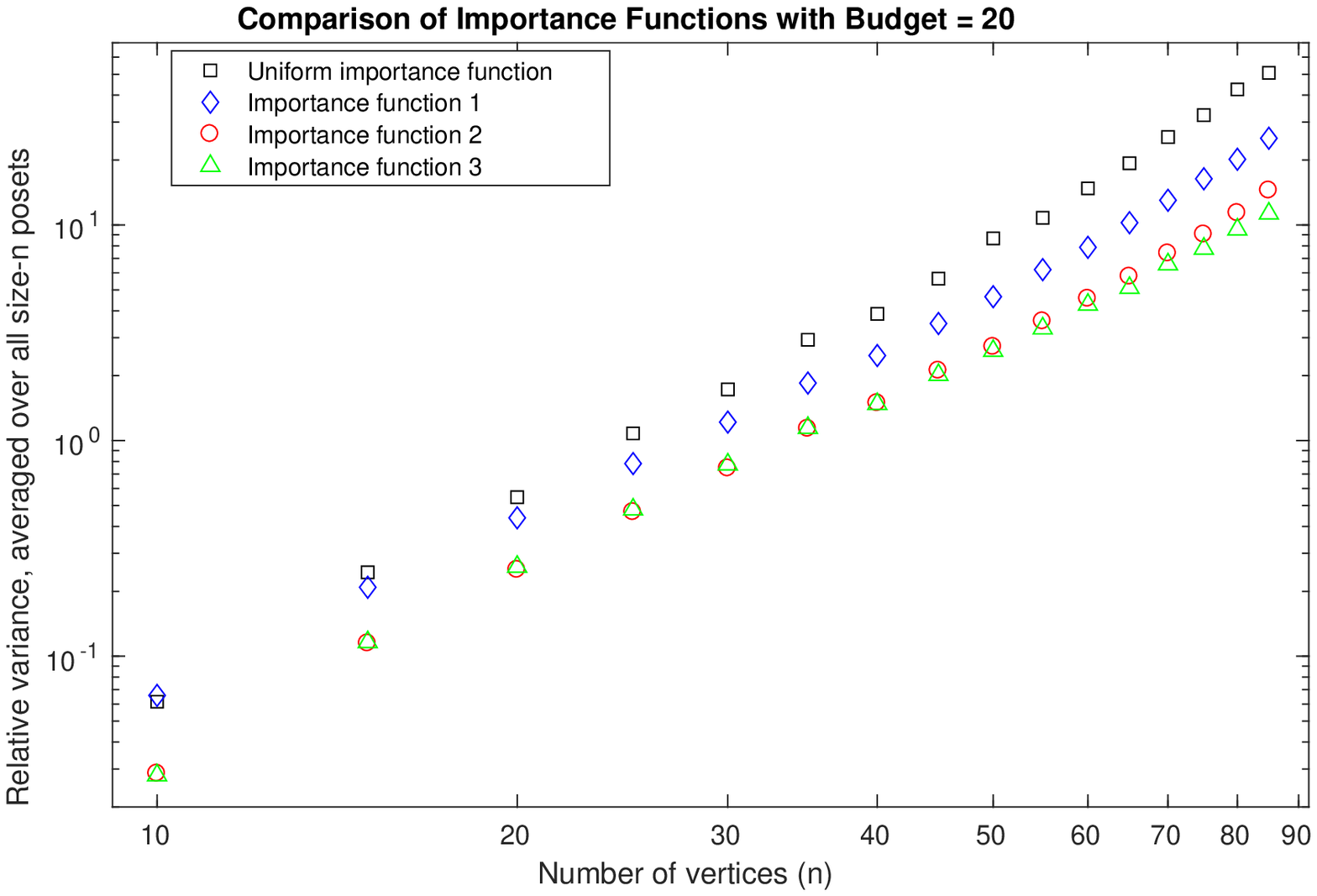}
\caption{Comparison of the relative variance of the importance functions as a function of poset size for fixed budget $B=20$}\label{graph5}
\end{figure}

\begin{figure}
\centering
\includegraphics[width=\textwidth]{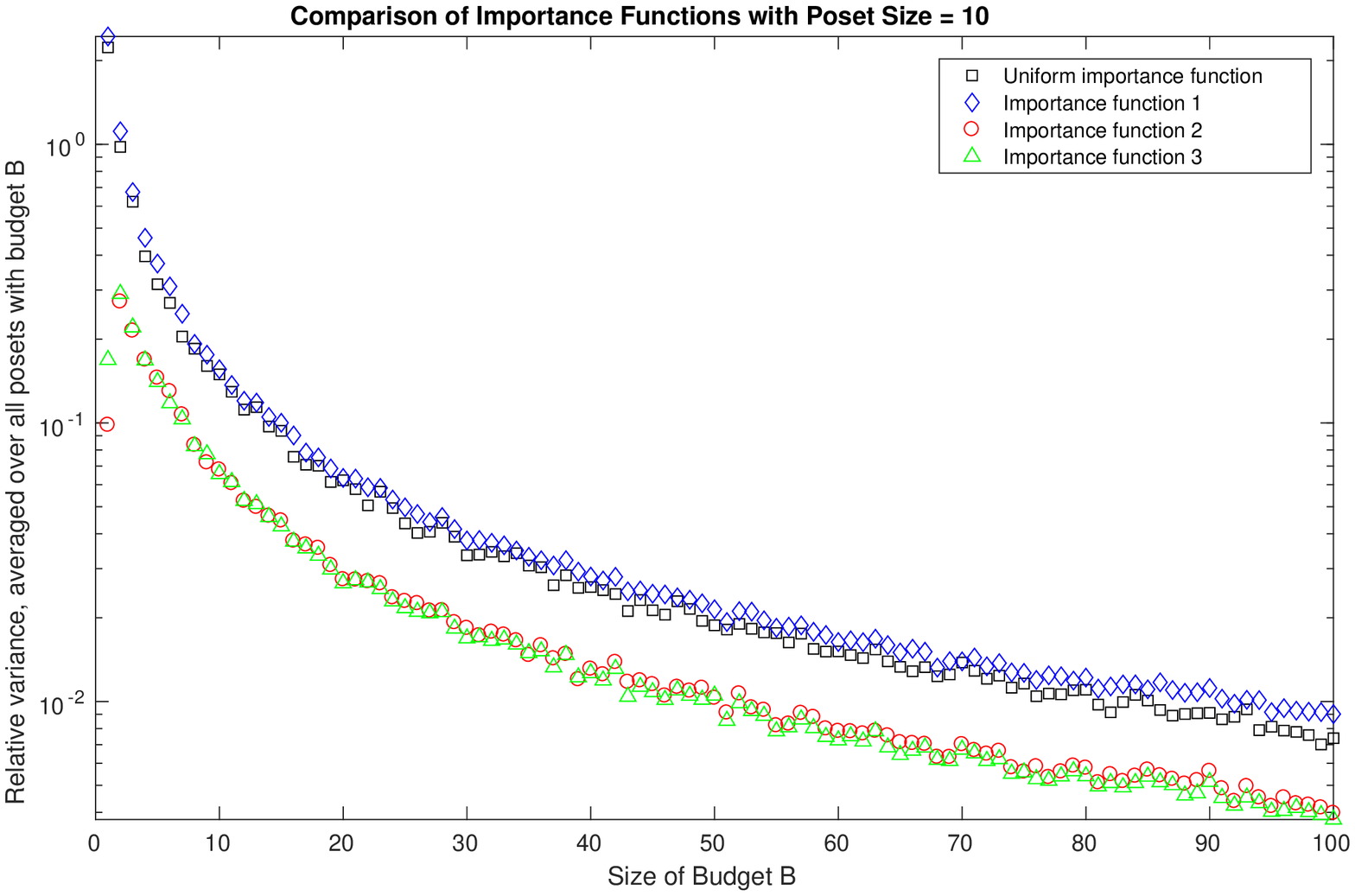}
\caption{Comparison of the relative variance of the importance functions as a function of budget size for fixed poset size $n=10$}\label{graph6}
\end{figure}

\begin{figure}
\centering
\includegraphics[width=\textwidth]{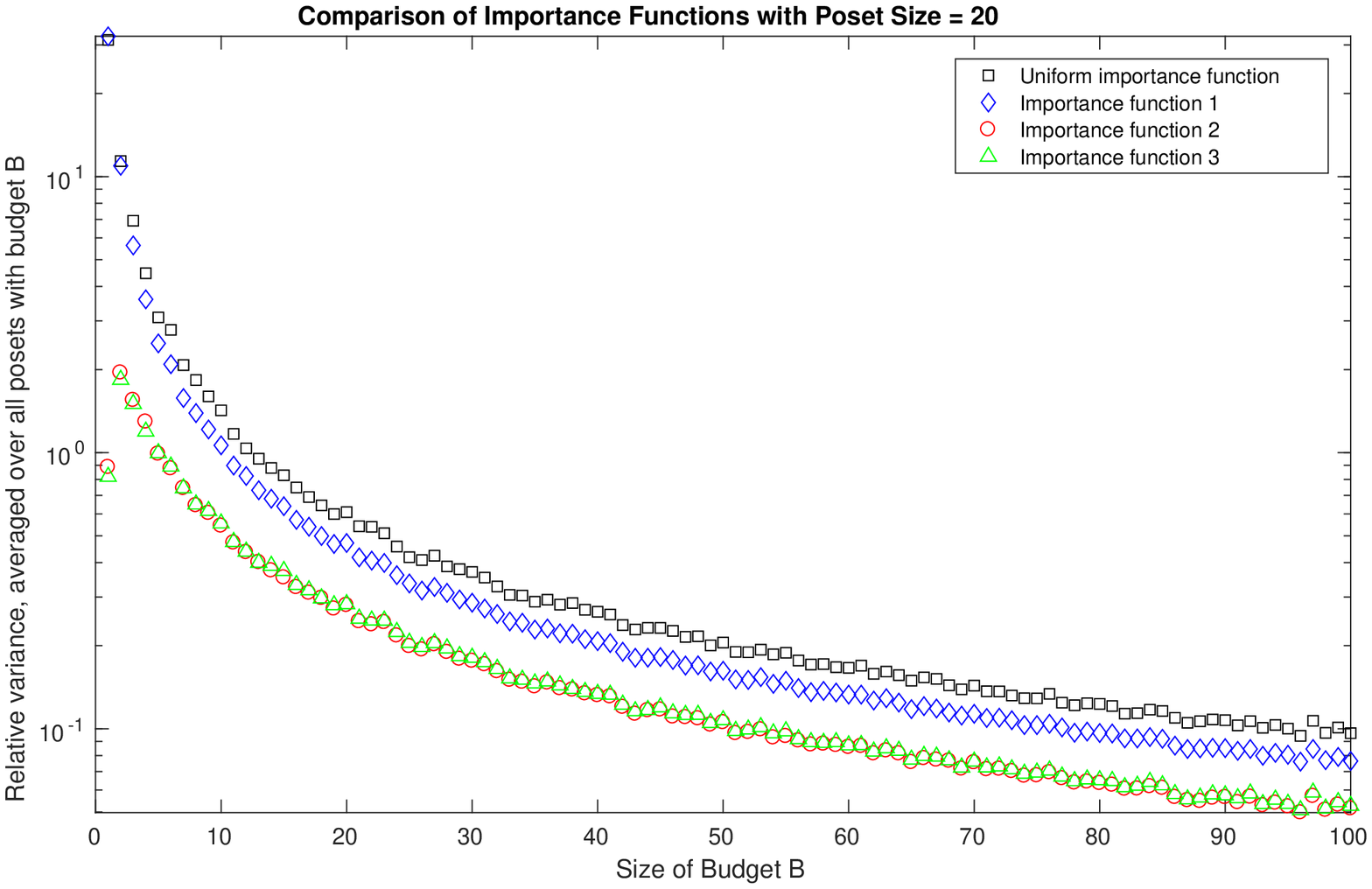}
\caption{Comparison of the relative variance of the importance functions as a function of budget size for fixed poset size $n=20$}\label{graph7}
\end{figure}

\begin{figure}
\centering
\includegraphics[width=\textwidth]{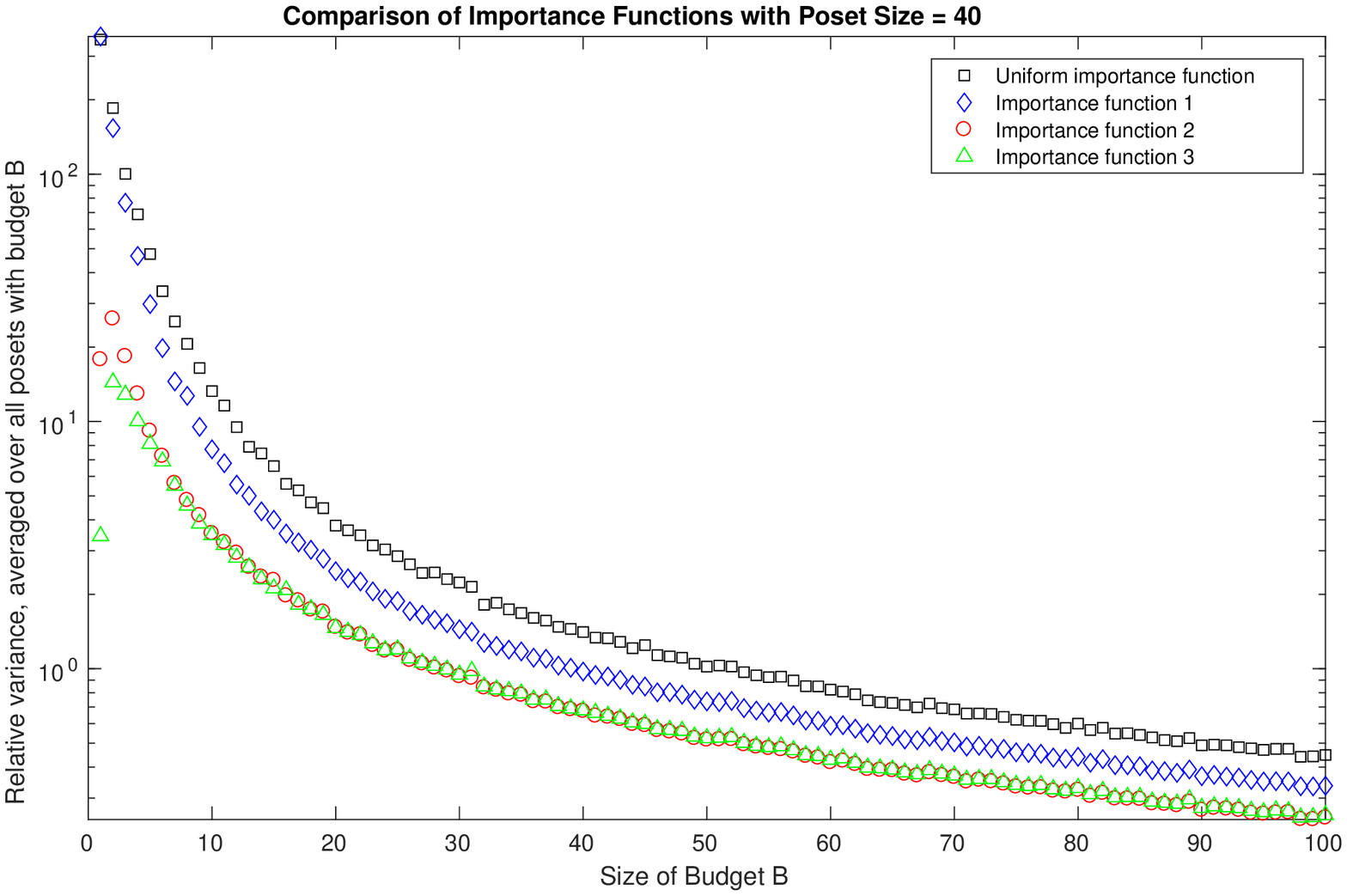}
\caption{Comparison of the relative variance of the importance functions as a function of budget size for fixed poset size $n=40$}\label{graph8}
\end{figure}

\section{Conclusions and Future Work}

We designed and implemented two generalizations of the stochastic enumeration method for counting the leaves of trees: the first algorithm for any user-supplied probability distribution on hypernodes, and the second algorithm for a probability distribution induced by any user-supplied importance function on the nodes of the tree.  We numerically tested the second algorithm on the problem of counting linear extensions of random posets, and showed that introducing an importance function can significantly reduce the variance of estimates.

Although our importance functions performed well in numerical testing, in future, we would like to explore the question of how to find better importance functions, as well as importance functions with provable performance guarantees.

\bibliographystyle{spbasic}
\bibliography{seibib}   

\end{document}